%%%%%%%%%%%%%%%%%%%%%%% file template.tex %%%%%%%%%%%%%%%%%%%%%%%%%
%
% This is a general template file for the LaTeX package SVJour3
% for Springer journals.          Springer Heidelberg 2010/09/16
%
% Copy it to a new file with a new name and use it as the basis
% for your article. Delete % signs as needed.
%
% This template includes a few options for different layouts and
% content for various journals. Please consult a previous issue of
% your journal as needed.
%
%%%%%%%%%%%%%%%%%%%%%%%%%%%%%%%%%%%%%%%%%%%%%%%%%%%%%%%%%%%%%%%%%%%
%
% First comes an example EPS file -- just ignore it and
% proceed on the \documentclass line
% your LaTeX will extract the file if required
% [arxiv_v2: filecontents  stripped, 188 chars]
\RequirePackage{fix-cm}
\documentclass[smallextended,draft]{svjour3}       % onecolumn (second format)
\smartqed  % flush right qed marks, e.g. at end of proof
\usepackage{graphicx}
\usepackage{mathptmx}      % use Times fonts if available on your TeX system
%
% insert here the call for the packages your document requires
\usepackage{latexsym,amsfonts,amssymb,amsmath,tensor}
\usepackage{marvosym}
% etc.
%
% please place your own definitions here and don't use \def but
\newcommand{\envelope}{(\raisebox{-.5pt}{\scalebox{1.45}{\Letter}}\kern-1.7pt)}

\newcommand{\rmd}{\textrm d}
\newcommand{\rmi}{\textrm i}
%
% Insert the name of "your journal" with
\journalname{General Relativity and Gravitation}

\begin{document}
\title{PP-waves with Torsion - a Metric-affine Model for the Massless Neutrino}

\author{Vedad Pasic \and Elvis Barakovic}

\institute{V. Pasic \envelope \at Department of Mathematics, University of Tuzla, Univerzitetska 4, 75000 Tuzla, Bosnia and Herzegovina \\
Tel.: +387-61-195464 \columncase{,}{\\}
Fax: +387-35-320861\columncase{,}{\\}
\email{vedad.pasic@untz.ba} \and
E. Barakovic  \at Department of Mathematics, University of Tuzla, Univerzitetska 4, 75000 Tuzla, Bosnia and Herzegovina \\ \email{elvis.barakovic@untz.ba}
}

\date{Received: date / Accepted: date}
%\titlerunning{Metric-affine Model for the Massless Neutrino}
%\authorrunning{E. Barakovic and V. Pasic}

\maketitle

\begin{abstract}
In this paper we deal with quadratic metric--affine gravity, which we briefly introduce,
explain and give historical and physical reasons for using this particular theory of gravity.
We then introduce a generalisation of well known spacetimes, namely pp-waves.
A classical pp-wave is a 4-dimensional Lorentzian spacetime \,
which admits a nonvanishing parallel spinor field;
here the connection is assumed to be Levi-Civita.
This definition was generalised in our previous work to metric compatible
spacetimes with torsion and used to construct new explicit vacuum solutions of
quadratic metric--affine gravity, namely \emph{generalised} pp-waves of parallel Ricci curvature.
The physical interpretation of these solutions we propose in this
article is that they represent a \emph{conformally invariant
metric--affine model for a massless elementary particle}. We give a comparison with the
classical model describing the interaction of gravitational and
massless neutrino fields, namely \emph{Einstein--Weyl theory} and construct pp-wave type
solutions of this theory. We point out that generalised pp-waves of parallel Ricci curvature are very
similar to pp-wave type solutions of the Einstein--Weyl model and therefore propose that
our generalised pp-waves of parallel Ricci curvature represent a metric--affine model for the massless
neutrino.
\keywords{quadratic metric--affine gravity \and pp-waves \and torsion \and massless neutrino \and Einstein--Weyl theory}
\PACS{04.50.-h}
\end{abstract}

\section{Introduction}
\label{introduction}

The smallest departure from a Riemannian spacetime of Einstein's general
relativity would consist of admitting \emph{torsion} (\ref{definitiontorsion}),
arriving thereby at a Riemann--Cartan spacetime, and, furthermore, possible nonmetricity
(\ref{nonmetricity}), resulting in a `\emph{metric--affine}' spacetime. Metric--affine gravity is a natural generalisation of Einstein's general relativity, which is based on a spacetime with a Riemannian metric $g$ of Lorentzian signature.

We consider spacetime to be a connected real
4-manifold $M$ equipped with a Lorentzian metric $g$
and an affine connection $\Gamma$.
The 10 independent
components of the (symmetric) metric tensor $g_{\mu\nu}$
and the 64 connection coefficients ${\Gamma^\lambda}_{\mu\nu}$
are the unknowns of our theory. Note that the characterisation of the spacetime manifold
by an \emph{independent} linear connection $\Gamma$ initially
distinguishes metric--affine gravity from general relativity. The
connection incorporates the inertial properties of spacetime and it
can be viewed, according to Hermann Weyl \cite{weylquadraticaction},
as the guidance field of spacetime. The metric describes the structure of spacetime with
respect to its spacio-temporal distance relations.

According to Hehl et al. in \cite{hehlreview}, in Einstein's general relativity
the linear connection of its Riemannian spacetime is metric--compatible
and symmetric.
The symmetry of the Levi--Civita connection translates into the closure of
infinitesimally small parallelograms.
Already the transition from the flat gravity-free Minkowski
spacetime to the Riemannian spacetime in Einstein's theory can locally be
understood as a deformation process. The lifting of the constraints of metric--compatibility and symmetry
yields nonmetricity and torsion, respectively.
The continuum under consideration, here classical spacetime, is thereby assumed to have
a non-trivial microstructure, similar to that of a liquid crystal or a dislocated metal or
feromagnetic material etc.
It is gratifying to have the geometrical concepts
of nonmetricity and torsion already arising in the (three-dimensional) continuum theory of lattice
defects, see \cite{KronerLattices,KronerKristal}.

We define our action as
\begin{equation}
\label{action}
S:=\int q(\!R)
\end{equation}
where $q$ is an $\mathrm{O}(1,3)$-invariant
quadratic form on curvature $R\,$.
Independent variation of the metric
$g$ and the connection $\Gamma$ produces Euler--Lagrange
equations which we will write symbolically as
\begin{eqnarray}
\label{eulerlagrangemetric}
\partial S/\partial g&=&0,
\\
\label{eulerlagrangeconnection}
\partial S/\partial\Gamma&=&0.
\end{eqnarray}
Our objective is the study of the
combined system of field equations
(\ref{eulerlagrangemetric}), (\ref{eulerlagrangeconnection}).
This is a system of $10+64$
real nonlinear partial differential equations
with $10+64$ real unknowns.

Our motivation comes from
Yang--Mills theory. The Yang--Mills action for the affine connection
is a special case of (\ref{action}) with
\begin{equation}
\label{YMq} q(\!R)=q_{\mathrm{YM}}(\!R):=
R^\kappa{}_{\lambda\mu\nu}\, R^\lambda{}_\kappa{}^{\mu\nu}\,.
\end{equation}
With this choice of $q(R)$, equation (\ref{eulerlagrangeconnection}) is
the Yang--Mills equation for the affine connection,
which was analysed by Yang \cite{yang}. The quadratic form $q$ appearing in (\ref{action}) is a generalisation of (\ref{YMq}). The general formula for $q$ contains 16 different $R^2$-terms
with 16 coupling constants. This formula is given in Appendix B
of \cite{annalen}. An equivalent formula can be found in
\cite{Esser,hehlandmaciasexactsolutions2}.

Yang was looking for Riemannian
solutions, so he specialised equation
(\ref{eulerlagrangeconnection}) to the Levi--Civita connection
\begin{equation}
\label{eulerlagrangeconnectionLC} \left.\partial
S/\partial\Gamma\,\right|_{\mathrm{L-C}}=0.
\end{equation} Here `specialisation' means that one sets
${\Gamma^\lambda}_{\mu\nu}=\genfrac{\{}{\}}{0pt}{}{\lambda}{\mu\nu}$
\emph{after} the variation in $\Gamma$ is carried out. It is known \cite{pseudo} that in this case for a generic 11-parameter action equation (\ref{eulerlagrangeconnectionLC}) reduces to
\begin{equation}
\label{compactYangMils}
\nabla_\lambda Ric_{\kappa\mu}
-\nabla_\kappa Ric_{\lambda\mu}=0.
\end{equation}
However,
according to \cite{annalen} for a generic 16-parameter
action equation (\ref{eulerlagrangeconnectionLC}) reduces to
\begin{equation}
\label{eulerlagrangeconnectionLC16} \nabla Ric=0.
\end{equation}
The field equations (\ref{compactYangMils}) and
(\ref{eulerlagrangeconnectionLC16}) are very much different, with
(\ref{eulerlagrangeconnectionLC16}) being by far more restrictive.
In particular, Nordstr\"om--Thompson spacetimes (Riemannian
spacetimes with ${}^*\!R^*=+R$) satisfy
(\ref{compactYangMils}) but do not necessarily satisfy
(\ref{eulerlagrangeconnectionLC16}). Note that the LHS of (\ref{compactYangMils}) is the Cotton tensor for metric compatible spacetimes with constant scalar curvature, see e.g. \cite{cotton,garciaCotton}. Compare this to equation (8) from \cite{annalen} and equation (27) from \cite{pasicvassiliev}.

Let us here mention the contributions of C.N. Yang \cite{yang} and E.W. Mielke \cite{mielkepseudoparticle} who showed, respectively, that Einstein spaces satisfy equations (\ref{eulerlagrangeconnection}) and (\ref{eulerlagrangemetric}).There is a substantial bibliography devoted to the study of the system (\ref{eulerlagrangemetric}), (\ref{eulerlagrangeconnection}) in the special case (\ref{YMq}) and one can get an idea of the historical development of the Yang--Mielke theory of gravity from \cite{buchdahl,fairchild1976,fairchild1976erratum,olesen,pavelleApr1975,stephenson,thompsonFeb1975,thompsonAug1975,wilczek}.

The motivation for choosing a model of gravity which is purely quadratic
in curvature is explained in Section 1 of \cite{annalen}.
The study of equations (\ref{eulerlagrangemetric}),
(\ref{eulerlagrangeconnection}) for specific purely quadratic curvature
Lagrangians has a long history. Quadratic  curvature Lagrangians were first discussed
by Weyl \cite{weylquadraticaction}, Pauli \cite{pauli}, Eddington \cite{eddington} and Lanczos
\cite{lanczos1,lanczos2,lanczos3} in an attempt to include the electromagnetic
field in Riemannian geometry.

The idea of using a purely quadratic action in General Relativity
goes back to Hermann Weyl, as given at the end of his paper
\cite{weylquadraticaction}, where he argued that the most
natural gravitational action should be quadratic in curvature and
involve all possible invariant quadratic combinations of curvature,
like the square of Ricci curvature, the square of scalar curvature,
etc. Unfortunately, Weyl himself never afterwards pursued this
analysis. Stephenson \cite{stephenson} looked at three different
quadratic invariants: scalar curvature squared, Ricci curvature
squared and the Yang--Mills Lagrangian (\ref{YMq}) and varied with respect
to the metric and the affine connection.
He concluded that every equation arising from the
above mentioned quadratic Lagrangians has the Schwarzschild solution and that
the equations give the same results for the three `crucial tests'
of general relativity, i.e. the bending of light, the advance of the
perihelion of Mercury and the red-shift.

Higgs \cite{higgs} continued in a similar fashion to show that in scalar squared
and Ricci squared cases, one set of equations may be transformed into
field equations of the Einstein type with an arbitrary `cosmological
constant' in terms of the `new gauge--invariant metric'.

One can get more information and form an idea on the historical development
of the quadratic metric--affine theory of gravity from
\cite{buchdahl,fairchild1976,fairchild1976erratum,higgs,mielkepseudoparticle,olesen,pavelleApr1975,stephenson,thompsonFeb1975,thompsonAug1975,wilczek,yang}.

It should be noted that the action (\ref{action}) contains only
purely quadratic curvature terms, so it excludes the
Einstein--Hilbert term (linear in curvature) and any terms quadratic
in torsion (\ref{definitiontorsion}) and nonmetricity (\ref{nonmetricity}).
By choosing a purely quadratic curvature
Lagrangian we are hoping to describe phenomena whose characteristic
wavelength is sufficiently small and curvature sufficiently large.

We should also point out that the
action (\ref{action}) is conformally invariant, i.e. it does not
change if we perform a Weyl rescaling of the metric $g\to e^{2f}g$,
$f:M\to\mathbb{R}$, without changing the connection $\Gamma$.

In our previous work \cite{pasicvassiliev}, we presented new non-Riemannian
solutions of the field equations
(\ref{eulerlagrangemetric}), (\ref{eulerlagrangeconnection}).
These new solutions were to be constructed explicitly
and the construction turned out to be very similar
to the classical construction of a pp-wave, only with torsion. This paper aims to
give additional information on these spacetimes, provide their physical interpretation,
additional calculations and future possible applications.

The paper has the following structure.
In section \ref{pptorsionChapter} we recall basic facts about pp-waves with torsion,
in subsection \ref{Classical pp-waves} we
provide information about classical pp-waves, in subsection \ref{PP-waves with torsion}
we recall the way pp-waves were generalised in \cite{pasicvassiliev} and list the properties of these
spacetimes with torsion and in subsection \ref{SpinorFormalismPPWaves} we present the pp-waves spinor formalism.
In section \ref{interpretation} we present our attempt at giving a physical interpretation to the solutions of the field equations from \cite{pasicvassiliev}. In subsection \ref{EinsteinWeylFieldEquations} we provide
a reminder on the classical model describing the interaction of gravitational and massless neutrino fields (Einstein--Weyl theory), while in subsection \ref{EWComparisonExisting} we present a brief review of known solutions of this theory.
In subsection \ref{ppwaveEW} we present our pp-wave type solutions of this theory.
In subsection \ref{comparisonSection} we compare the Einstein--Weyl solutions to our conformally invariant solutions. Finally, appendix \ref{AppendixSpinorFormalism} provides the spinor formalism used throughout our work, appendix \ref{AppendixWeyl} gives detailed calculations involved in comparing our solutions to
Einstein--Weyl theory and appendix \ref{AppendixCorrection} provides a correction of a mistake
found in our previous work \cite{pasicvassiliev} and gives the motivation for future work.

\section{Notation}
\label{Notation}

Our notation follows \cite{King and Vassiliev,skopje,pasicvassiliev,pseudo,annalen}. In particular,
we denote local coordinates by $x^\mu$, $\mu=0,1,2,3$,
and write $\partial_\mu:=\partial/\partial x^\mu$.
We define torsion as
\begin{equation}\label{definitiontorsion}
{T^\lambda}_{\mu\nu}:=
{\Gamma^\lambda}_{\mu\nu}-{\Gamma^\lambda}_{\nu\mu}\,.
\end{equation}
The irreducible pieces of torsion are, following \cite{pseudo},
\begin{equation}\label{torsion-pieces1}
T^{(1)}=T-T^{(2)}-T^{(3)},\quad \tensor{{T^{(2)}}}{_{\lambda\mu\nu}}= g_{\lambda\mu}v_\nu-g_{\lambda\nu}v_\mu, \quad
T^{(3)}=\ast w,
\end{equation}
where
\begin{equation}\label{torsion-pieces2}
v_\nu=\frac{1}{3}\tensor{T}{^{\lambda}_{\lambda\nu}},\ \ w_\nu=\frac{1}{6}\sqrt{|\det g|}\tensor{T}{^{\kappa\lambda\mu}}\tensor{\varepsilon}{_{\kappa\lambda\mu\nu}}.
\end{equation}
The pieces $T^{(1)},T^{(2)}$ i $T^{(3)}$ are called \emph{tensor torsion}, \emph{trace torsion}, and \emph{axial torsion}
respectively. We say that our connection $\Gamma$ is metric compatible
if $\nabla g\equiv0$. The interval is $\rmd s^2:=g_{\mu\nu}\,d x^\mu\,d x^\nu$. Given a scalar function $f:M\to\mathbb{R}$ we write for brevity
$\displaystyle
\int f:=\int_Mf\,\sqrt{|\det g|}\,\rmd x^0\rmd x^1\rmd x^2\rmd x^3\,,
\det g:=\det(g_{\mu\nu})\,.
$ We define \emph{nonmetricity} by
\begin{equation}\label{nonmetricity}
Q_{\mu \alpha\beta}:=\nabla_\mu g_{\alpha\beta}.
\end{equation}
We use the term `parallel'
to describe the situation when the covariant derivative of some spinor
or tensor field is identically zero. We do not assume that our spacetime admits a (global) spin structure,
cf. Section 11.6 of \cite{Nakahara}.
In fact, our only topological assumption is connectedness.
This does not prevent us from defining and parallel transporting
spinors or tensors locally.

\section{PP-waves With Torsion}\label{pptorsionChapter}

In this section, where we mostly follow the exposition from \cite{pasicvassiliev}, we provide background information about \emph{pp-waves}, starting with the notion
of a classical pp-wave, then introducing a generalisation with the addition of torsion and lastly presenting the particular spinor formalism of pp-waves.

%%%%%%%%%%%%%%%%%%%%%%%%%%%%%%%%%%
%      CLASSIC PP WAVES          %
%%%%%%%%%%%%%%%%%%%%%%%%%%%%%%%%%%

\subsection{Classical pp-waves}\label{Classical pp-waves}

PP-waves are well known spacetimes in general relativity, first
discovered by Brink--mann \cite{brinkmann} in 1923, and subsequently
rediscovered by several authors, for example Peres
\cite{peres} in 1959.
There are differing views on what the `pp' stands for. According to
Griffiths \cite{griffiths book} and Kramer et al. \cite{exact solutions Kramer}
`pp' is an abbreviation for
`plane-fronted gravitational waves with parallel rays'.
See e.g. \cite{Adamowitz,Babourova,blagojevichehl,brinkmann,garcia,griffiths book,exact solutions Kramer,obukhov1,obukhov2,balkanica,pasicvassiliev,peres,peresweb,poland,annalen} for more information on pp-waves and pp-wave type solutions of metric--affine gravity.

\begin{definition}\label{definition 1 of a pp-space}
A \emph{pp-wave} is a Riemannian spacetime which admits a
nonvanishing parallel spinor field.
\end{definition}
It was only relatively recently discovered in \cite{annalen} that pp-waves of parallel Ricci curvature are solutions of (\ref{eulerlagrangemetric}),
(\ref{eulerlagrangeconnection}). This section closely follows Section 3 from \cite{pasicvassiliev} in exposition and we only present the the most important facts about these well known spacetimes.
The nonvanishing parallel spinor field appearing in the definition
of pp-waves will be denoted throughout this paper by
\[
\chi=\chi^a
\] and we assume this spinor field to be \emph{fixed}.
Put
\begin{equation}
\label{formula for l} l^\alpha:=\sigma^\alpha{}_{a\dot
b}\,\chi^a\bar\chi^{\dot b}
\end{equation}
where the $\sigma^\alpha$ are Pauli matrices\footnote{See appendix
\ref{AppendixSpinorFormalism} for our general spinor formalism and section
\ref{SpinorFormalismPPWaves} for spinor formalism for pp-waves}. Then
$l$ is a nonvanishing parallel real null vector field. Now we define the real scalar function
\begin{equation}
\label{phase} \varphi:M\to\mathbb{R},\qquad \varphi(x):=\int l\cdot
d x\,.
\end{equation}
This function is called the \emph{phase}. It is defined uniquely up
to the addition of a constant and possible multi-valuedness resulting
from a nontrivial topology of the manifold.
Put
\begin{equation}
\label{formula for F} F_{\alpha\beta}:=\sigma_{\alpha\beta
ab}\,\chi^a\chi^b
\end{equation}
where the $\sigma_{\alpha\beta}$ are `second order Pauli matrices'
(\ref{second order Pauli matrices for pp-metric}), (\ref{second order Pauli matrices}). Then $F$ is a nonvanishing
parallel complex 2-form with the additional properties $*F=\pm\rmi
F$ and $\det F=0$. It can be written as
\begin{equation}
\label{decomposition of F} F=l\wedge m
\end{equation}
where $m$ is a complex vector field satisfying $m_\alpha
m^\alpha=l_\alpha m^\alpha=0$, $m_\alpha\bar{m}^\alpha=-2$.
It is known, see \cite{Alekseevsky,Bryant}, that Definition
\ref{definition 1 of a pp-space} is equivalent to the following

\begin{definition}
\label{definition 2 of a pp-space} A \,\emph{pp-wave} is a
Riemannian spacetime whose metric can be written locally in the form
\begin{equation}
\label{metric of a pp-wave} d s^2= \,2\,d x^0\,d x^3-(d x^1)^2-(d
x^2)^2 +f(x^1,x^2,x^3)\,(d x^3)^2
\end{equation}
in some local coordinates $(x^0,x^1,x^2,x^3)$.
\end{definition}
The remarkable property of the metric
(\ref{metric of a pp-wave}) is that the corresponding curvature
tensor $R$ is linear in $f$:
\begin{equation}
\label{curvature of a pp-space special} R_{\alpha\beta\gamma\delta}=
-\frac12(l\wedge\partial)_{\alpha\beta}\,(l\wedge\partial)_{\gamma\delta}f
\end{equation}
where
$(l\wedge\partial)_{\alpha\beta}:=l_\alpha\partial_\beta-\partial_\alpha
l_\beta$.
The advantage of Definition \ref{definition 2 of a pp-space} is that
it gives an explicit formula for the metric of a pp-wave. Its
disadvantage is that it relies on a particular choice of local
coordinates in each coordinate patch.
The choice of local coordinates in which the pp-metric assumes the form
(\ref{metric of a pp-wave}) is not unique. We will restrict our
choice to those coordinates in which
\begin{equation}
\label{explicit l and a} \chi^a=(1,0), \qquad l^\mu=(1,0,0,0),
\qquad m^\mu=(0,1,\mp\rmi,0).
\end{equation}
With such a choice formula (\ref{phase}) reads
$\varphi(x)=x^3+\mathrm{const}$.
Formula (\ref{curvature of a pp-space special}) can be rewritten
in invariant form
\begin{equation}
\label{curvature of a pp-space}
R=-\frac12(l\wedge\nabla)\otimes(l\wedge\nabla)f
\end{equation}
where $l\wedge\nabla:=l\otimes\nabla-\nabla\otimes l$.
The curvature of a pp-wave has the following irreducible pieces:
(symmetric) trace--free Ricci and Weyl.
Ricci curvature is proportional to $l\otimes l$ whereas Weyl
curvature is a linear combination of $\mathrm{Re}\left((l\wedge
m)\otimes(l\wedge m)\right)$ and $\mathrm{Im}\left((l\wedge
m)\otimes(l\wedge m)\right)$. In our special local coordinates (\ref{metric of a pp-wave}),
(\ref{explicit l and a}), we can express these as
\begin{eqnarray}
\label{explicit W and Ric1}
Ric &=& \frac12 (f_{11} + f_{22}) \, l \otimes l,
\\
\label{explicit W and Ric2}
\mathcal{W} &=& \sum_{j,k=1}^2 w_{jk} (l\wedge m_j)\otimes(l\wedge m_k),
\end{eqnarray}
where $m_1=\mathrm{Re } (m), m_2=\mathrm{Im }(m)$, $f_{\alpha\beta} := \partial_{\alpha}\partial_{\beta} f$ and $w_{jk}$ are real scalars given by
$\displaystyle w_{11} = \frac14 (-f_{11} +f_{22}), \ w_{12} = \pm \frac12 f_{12}, \ w_{22} = -w_{11},  \ w_{21} = w_{12}$.
\begin{remark}
Note that the Cotton tensor of classical pp-waves with parallel Ricci curvature vanishes, as classical pp-waves are metric compatible spacetimes with zero scalar curvature. In the theory of conformal spaces the main geometrical objects to be analysed are the Weyl and the Cotton tensors, see \cite{garciaCotton}.
It is well known that for conformally flat spaces the Weyl tensor has to vanish and consequently the Cotton tensor has to vanish too. Note that the Cotton tensor is only conformally invariant in three dimensions.
\end{remark}

%%%%%%%%%%%%%%%%%%%%%%%%%%%%%%%%%%
%      GENERALISED PP-WAVES      %
%%%%%%%%%%%%%%%%%%%%%%%%%%%%%%%%%%

\subsection{Generalised pp--waves}
\label{PP-waves with torsion}

One natural way of generalising the concept of a classical
pp-wave is simply to extend Definition \ref{definition 1 of a
pp-space} to general metric compatible spacetimes, i.e. spacetimes
whose connection is not necessarily Levi-Civita. However, this gives
a class of spacetimes which is too wide and difficult to work with.
We choose to extend the classical definition in a more special way
better suited to the study of the system of field equations
(\ref{eulerlagrangemetric}), (\ref{eulerlagrangeconnection}).

Consider the polarized Maxwell equation\footnote{See in particular \cite{King and Vassiliev} as well as \cite{balkanica,pasicvassiliev,pseudo,garda,poland,annalen}}
\begin{equation}
\label{polarized Maxwell equation}
*d A=\pm\rmi d A
\end{equation}
in a classical pp-space. Here
$A$ is the unknown complex vector field. The motivation for calling equation (\ref{polarized Maxwell equation}) the polarized Maxwell equation comes from the fact that any solution of (\ref{polarized Maxwell equation}) is a solution of the Maxwell equation $\delta \mathrm{d}u=0$, see \cite{King and Vassiliev}.
We seek plane wave
solutions of (\ref{polarized Maxwell equation}). These can be written down
explicitly:
\begin{equation}
\label{plane wave} A=h(\varphi)\,m\,+\,k(\varphi)\,l\,.
\end{equation}
Here $l$ and $m$ are the vector fields defined in section
\ref{Classical pp-waves}, $\ h,k:\mathbb{R}\to\mathbb{C}$ are
arbitrary functions, and $\varphi$ is the phase (\ref{phase}).

\begin{definition}
\label{definition of a generalised pp-space} A \emph{generalised
pp-wave} is a metric compatible spacetime with pp-metric and torsion
\begin{equation}
\label{define torsion} T:=\frac12\mathrm{Re}(A\otimes d A)
\end{equation}
where $A$ is a vector field of the form (\ref{plane wave}).
\end{definition}

We list below the main properties of generalised pp-waves. Note that here and
further on we denote by $\{\!\nabla\!\}$ the covariant derivative
with respect to the Levi-Civita connection which should not be
confused with the full covariant derivative $\nabla$ incorporating
torsion.

The curvature of a generalised pp-wave is
\begin{equation}
\label{curvature of a generalised pp-space}
R=-\frac12(l\wedge\{\!\nabla\!\})\otimes(l\wedge\{\!\nabla\!\})f
+\frac14\mathrm{Re} \left( (h^2)''\,(l\wedge m)\otimes(l\wedge m)
\right).
\end{equation}
and the torsion of a generalised pp-wave is
\begin{equation}
\label{torsion of a generalised pp-space-general}
T=\mathrm{Re} \left(
(a\ l + b\ m) \otimes (l\wedge m)
\right),
\end{equation}
where
\[
a:=\frac12 h'(\varphi)\ k(\varphi), \quad
b:=\frac12 h'(\varphi)\ h(\varphi).
\]
Torsion can be written down even more explicitly in the following form
\begin{equation}\label{torsionVeryExplicit}
T=\sum_{j,k=1}^2t_{jk}\,m_j\otimes(l\wedge m_k)
+\sum_{j=1}^2t_j\,l\otimes(l\wedge m_j),
\end{equation} where
\[
t_{11}=-t_{22}=\frac12 \mathrm{Re }(b),\
t_{12}=t_{21}=-\frac12 \mathrm{Im }(b),\
t_1=\frac12 \mathrm{Re }(a), \ t_2=-\frac12 \mathrm{Im }(a),
\]  $m_1=\mathrm{Re } (m), m_2=\mathrm{Im }(m)$ and $a$ and $b$ at the same functions of the phase $\varphi$ appearing in equation (\ref{torsion of a generalised pp-space-general}).
\begin{remark}
From equation (\ref{torsionVeryExplicit}) we can clearly see that torsion has $4$ independent non-zero components.
\end{remark}
In the beginning of section \ref{Classical pp-waves} we introduced
the spinor field $\chi$ satisfying $\{\!\nabla\!\}\chi=0$. It becomes
clear that this spinor field also satisfies $\nabla\chi=0$.

\begin{lemma}\label{sameSpinorField}
The generalised pp-wave and the underlying classical pp-wave
admit the same nonvanishing parallel spinor field.
\end{lemma}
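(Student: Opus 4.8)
The plan is to compare the spinor covariant derivatives induced by the two connections. Write the full connection of the generalised pp-wave as $\Gamma=\{\!\Gamma\!\}+K$, where $\{\!\Gamma\!\}$ is the Levi--Civita connection of the underlying pp-metric and $K$ is the contorsion tensor, the usual linear combination of the three index permutations of the torsion $T$; this $K$ is uniquely determined by metric compatibility of $\Gamma$. Lifted to spinors this gives
\[
\nabla_\mu\chi^a=\{\!\nabla_\mu\!\}\chi^a+\tfrac14\,K_{\alpha\mu\beta}\,(\sigma^{\alpha\beta})^a{}_c\,\chi^c,
\]
where the $\sigma^{\alpha\beta}$ are the second order Pauli matrices (with one spinor index raised). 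Since $\chi$ is parallel for the Levi--Civita connection (Section \ref{Classical pp-waves}), the lemma is equivalent to the algebraic identity $K_{\alpha\mu\beta}\,(\sigma^{\alpha\beta})^a{}_c\,\chi^c=0$.

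Next I would substitute the explicit torsion. By (\ref{torsionVeryExplicit}), equivalently (\ref{torsion of a generalised pp-space-general}), every term of $T_{\lambda\mu\nu}$ is a product of three of the real vectors $l,m_1,m_2$ in which the last two slots carry one of the 2-forms $l\wedge m_k$; in particular the 2-form factor is, up to the real/imaginary split, the self-dual form $F=l\wedge m$ (recall $*F=\pm\rmi F$). Forming $K_{\alpha\mu\beta}$ merely redistributes these three vectors among the three slots, so $K_{\alpha\mu\beta}\,(\sigma^{\alpha\beta})^a{}_c\,\chi^c$ is a sum of terms of the shape $Y_\mu\,(X\wedge Z)_{\alpha\beta}\,(\sigma^{\alpha\beta})^a{}_c\,\chi^c$ with $X,Y,Z\in\{l,m_1,m_2\}$, and it suffices to see that each such term vanishes.

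Then I would dispose of these terms one by one. If $X=Z$ the term vanishes because $X\wedge X=0$; in particular the $l\otimes(l\wedge m_j)$ part of $T$ produces only the wedges $l\wedge l=0$ and $l\wedge m_j$. For $X\wedge Z=m_1\wedge m_2$ --- the only genuinely new wedge, arising from the off-diagonal pieces $t_{12}\,m_1\otimes(l\wedge m_2)$ and $t_{21}\,m_2\otimes(l\wedge m_1)$ --- the relation $t_{12}=t_{21}$ forces the two contributions to combine into the tensor $(m_1)_\alpha(m_2)_\beta+(m_2)_\alpha(m_1)_\beta$, which is symmetric in $\alpha\beta$ and hence annihilated by the antisymmetric $\sigma^{\alpha\beta}$. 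Finally, for $X\wedge Z=l\wedge m_j$ use $l\wedge m_1=\tfrac12(F+\bar F)$ and $l\wedge m_2=\tfrac{1}{2\rmi}(F-\bar F)$: the anti-self-dual part $\bar F=l\wedge\bar m$ is orthogonal to the (single-chirality) unprimed second order Pauli matrices, so $\bar F_{\alpha\beta}\,(\sigma^{\alpha\beta})_{ab}=0$, whereas for the self-dual part $F$ one uses $F_{\alpha\beta}=\sigma_{\alpha\beta bd}\,\chi^b\chi^d$ together with the standard contraction identity for the second order Pauli matrices to obtain $F_{\alpha\beta}\,(\sigma^{\alpha\beta})^a{}_c\,\chi^c\propto\chi^a\,(\chi_c\chi^c)=0$, since $\chi_c\chi^c=\varepsilon_{cd}\,\chi^d\chi^c=0$. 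Collecting the cases gives $K_{\alpha\mu\beta}\,(\sigma^{\alpha\beta})^a{}_c\,\chi^c=0$, hence $\nabla\chi=\{\!\nabla\!\}\chi=0$.

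The main obstacle is the bookkeeping in the middle step: keeping track of exactly which permutations of $T$, and with which coefficients, enter $K$, and of verifying that once the symmetry relations among the $t_{jk}$ and the nullity relations $l\cdot l=l\cdot m=0$ have been used, nothing survives except terms killed by $\chi_c\chi^c=0$. None of this is conceptually deep, but it has to be done carefully; alternatively, the identity $K_{\alpha\mu\beta}\,(\sigma^{\alpha\beta})^a{}_c\,\chi^c=0$ can be read off directly from the explicit torsion and curvature computations carried out in \cite{pasicvassiliev}.
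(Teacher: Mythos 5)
Your proposal is correct and shares the paper's skeleton: both proofs reduce the lemma to the statement that the torsion-generated part of the spinor connection (\ref{spinor connection coefficient}) annihilates the parallel spinor $\chi$. The difference is in how that algebraic fact is established. The paper writes the extra term directly as $\frac14\,\sigma^{\alpha\,a\dot c}\,T_{\mu\alpha\beta}\,\sigma^\beta{}_{b\dot c}\,\chi^b$, notes via (\ref{torsion of a generalised pp-space-general}) that the contracted $2$-form factor is $l\wedge m$, and then verifies $(l\wedge m)_{\alpha\beta}\,\sigma^{\alpha\beta}{}^a{}_b=0$ by a direct coordinate computation with (\ref{metric of a pp-wave}), (\ref{explicit l and a}) and (\ref{second order Pauli matrices for pp-metric}) — a check that hinges on the orientation of the Pauli matrices agreeing with the sign in (\ref{polarized Maxwell equation}), and whose displayed form in the paper even contains a typo ($+\rmi\sigma^{31}$ should read $+\rmi\sigma^{32}$). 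You instead go through the contorsion $K$, which permutes the slots of $T$ and therefore produces wedges beyond $l\wedge m_k$, and you dispose of each class invariantly: repeated-vector wedges vanish, the $m_1\wedge m_2$ cross terms cancel because $t_{12}=t_{21}$ makes the relevant combination symmetric in the contracted pair (this cancellation does check out against the explicit contorsion $K_{\alpha\mu\beta}=\frac12(T_{\alpha\mu\beta}-T_{\mu\beta\alpha}+T_{\beta\alpha\mu})$), and the surviving $l\wedge m_j$ terms are killed either by the duality mismatch $\bar F_{\alpha\beta}\,\sigma^{\alpha\beta}{}_{ab}=0$ or by $F_{\alpha\beta}\,\sigma^{\alpha\beta}{}^a{}_c\,\chi^c\propto\chi^a\,\chi_c\chi^c=0$. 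What your route buys is robustness and completeness: it is insensitive to the sign conventions (depending on orientation, one of $F\cdot\sigma$, $\bar F\cdot\sigma$ vanishes identically and the other annihilates $\chi$, but in every case the conclusion holds), and it makes explicit the contorsion bookkeeping — the slot-permutation cross terms — which the paper's shorter formula for $\nabla_\mu\chi^a$ passes over silently. The cost is a longer case analysis where the paper's fixed-convention coordinate check is immediate.
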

\begin{proof}
To see that $\nabla\chi=0$, we look at the only remaining torsion generated term from formula (\ref{spinor connection coefficient}) for the spinor connection coefficients $\Gamma^a{}_{\mu b}$, namely
\[
\nabla_\mu \chi^a= \{\!\nabla\!\}_\mu\chi^a + \frac14 \sigma^\alpha{}^{a\dot{c}} T_{\mu\alpha\beta}\sigma^\beta{}_{b\dot{c}} \chi^b.
\] In view of equation (\ref{torsion of a generalised pp-space-general}), it is sufficient to show that the term involving $(l\wedge m)$ contracted with the Pauli matrices gives zero. To see this, we rewrite the term in the following form
\begin{eqnarray}
\nonumber
\sigma^\alpha{}^{a\dot{c}} (l\wedge m)_{\alpha\beta}\sigma^\beta{}_{b\dot{c}} &=&
\frac12 (l\wedge m)_{\alpha\beta}( \sigma^\alpha{}^{a\dot{c}} \sigma^\beta{}_{b\dot{c}}-\sigma^\beta{}^{a\dot{c}} \sigma^\alpha{}_{b\dot{c}}) ,
\\
\nonumber
&=& (l\wedge m)_{\alpha\beta}\sigma^{\alpha\beta}{}^a{}_{b} = 0,
\end{eqnarray} which can be checked directly using our local coordinates (\ref{metric of a pp-wave}), (\ref{explicit l and a}) and the second order Pauli matrices (\ref{second order Pauli matrices for pp-metric}), i.e.
\[
-\sigma^{13}{}_{ab}-i\sigma^{23}{}_{ab}+\sigma^{31}{}_{ab}+i\sigma^{31}{}_{ab} = 0.
\] Hence, $\nabla \chi = 0$. \qed
\end{proof}
\begin{remark}
In view of Lemma \ref{sameSpinorField}, it is clear that both the generalised pp-wave and the underlying classical pp-wave admit the same nonvanishing parallel real null vector field $l$
and the same nonvanishing parallel complex 2-form (\ref{formula for F}), (\ref{decomposition of F}).
\end{remark}

\begin{lemma}\label{torsionPurelyTensorLemma}
The torsion (\ref{define torsion}) of a generalised pp-wave is purely tensor\footnote{Only the $T^{(1)}$, or `tensor torsion' irreducible piece of torsion is non-zero, see equations (\ref{torsion-pieces1}), (\ref{torsion-pieces2}).}, i.e.
\begin{equation*}
T^\alpha{}_{\alpha\gamma}=0, \qquad
\varepsilon_{\alpha\beta\gamma\delta}T^{\alpha\beta\gamma}=0.
\end{equation*}
\end{lemma}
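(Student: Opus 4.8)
The plan is to verify the two stated identities directly from the explicit form of the torsion. I would begin with the expression (\ref{torsionVeryExplicit}),
\[
T=\sum_{j,k=1}^2 t_{jk}\,m_j\otimes(l\wedge m_k)
+\sum_{j=1}^2 t_j\,l\otimes(l\wedge m_j),
\]
and note that every term is built out of the vector fields $l$, $m_1=\mathrm{Re}(m)$, $m_2=\mathrm{Im}(m)$, whose algebraic relations are recorded just after (\ref{decomposition of F}): $m_\alpha m^\alpha=l_\alpha m^\alpha=0$ and $m_\alpha\bar m^\alpha=-2$, together with $l_\alpha l^\alpha=0$. Translating these into the real fields gives $l\cdot l=l\cdot m_1=l\cdot m_2=0$, $m_1\cdot m_1+m_2\cdot m_2=-2$ (and in the special coordinates (\ref{explicit l and a}) one in fact has $m_1\cdot m_1=m_2\cdot m_2=-1$, $m_1\cdot m_2=0$). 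These orthogonality relations are what force both irreducible pieces to vanish.

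For the trace torsion piece I would compute $T^\alpha{}_{\alpha\gamma}=g^{\alpha\lambda}T_{\lambda\alpha\gamma}$. Writing out (\ref{torsionVeryExplicit}) with the antisymmetrisation $(l\wedge m_k)_{\alpha\gamma}=l_\alpha (m_k)_\gamma-(m_k)_\alpha l_\gamma$, each contribution to the trace is a linear combination of $(m_j\cdot l)$, $(m_j\cdot m_k)$ and $(l\cdot l)$ multiplied by a component of $l$ or $m_k$; the terms proportional to $l\cdot m_j$ and $l\cdot l$ vanish immediately, and the surviving combination $\sum_{j,k} t_{jk}(m_j\cdot m_k)(\text{something})$ cancels because $t_{11}=-t_{22}$ and $t_{12}=t_{21}$ are paired precisely against $m_1\cdot m_1=m_2\cdot m_2$ and $m_1\cdot m_2=m_2\cdot m_1$ with opposite signs. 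Alternatively, and more cleanly, one observes from (\ref{torsion of a generalised pp-space-general}) that $T=\mathrm{Re}\big((a\,l+b\,m)\otimes(l\wedge m)\big)$, so $T^\alpha{}_{\alpha\gamma}$ involves the contractions $l^\alpha(l\wedge m)_{\alpha\gamma}$ and $m^\alpha(l\wedge m)_{\alpha\gamma}$, both of which are zero by $l\cdot l=l\cdot m=m\cdot m=0$. This is the slicker route and I would present it that way.

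For the axial piece I would compute $\varepsilon_{\alpha\beta\gamma\delta}T^{\alpha\beta\gamma}$ using the same compact form: up to the $\mathrm{Re}$, $T^{\alpha\beta\gamma}=(a\,l^\alpha+b\,m^\alpha)(l^\beta m^\gamma-m^\beta l^\gamma)$, and contracting with the totally antisymmetric $\varepsilon$ produces terms $\varepsilon_{\alpha\beta\gamma\delta}l^\alpha l^\beta m^\gamma$, $\varepsilon_{\alpha\beta\gamma\delta}l^\alpha m^\beta l^\gamma$, $\varepsilon_{\alpha\beta\gamma\delta}m^\alpha l^\beta m^\gamma$, $\varepsilon_{\alpha\beta\gamma\delta}m^\alpha m^\beta l^\gamma$, each of which vanishes because it contains a repeated vector in an antisymmetrised slot. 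Hence $\varepsilon_{\alpha\beta\gamma\delta}T^{\alpha\beta\gamma}=0$, so $T^{(2)}=T^{(3)}=0$ and $T=T^{(1)}$ is purely tensor. I do not expect any real obstacle here; the only thing to be careful about is bookkeeping the $\mathrm{Re}$ and the antisymmetrisation conventions, and one could equally well run the whole argument in the explicit coordinates (\ref{metric of a pp-wave}), (\ref{explicit l and a}) using (\ref{torsionVeryExplicit}) and the second order Pauli matrices as a direct check, which is presumably how the authors phrase it.
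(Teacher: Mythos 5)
Your proposal is correct, and for the trace part it is essentially the paper's own argument: the authors also work from the compact form (\ref{torsion of a generalised pp-space-general}), $T=\mathrm{Re}\bigl((a\,l+b\,m)\otimes(l\wedge m)\bigr)$, and kill $T^\alpha{}_{\alpha\gamma}$ via the contractions $l^\alpha(l\wedge m)_{\alpha\gamma}=m^\alpha(l\wedge m)_{\alpha\gamma}=0$ coming from $l_\alpha l^\alpha=l_\alpha m^\alpha=m_\alpha m^\alpha=0$; your alternative coordinate computation from (\ref{torsionVeryExplicit}) is a harmless redundancy. Where you genuinely diverge is the axial piece: the paper invokes the self-duality of the $2$-form $F=l\wedge m$ (the property $*F=\pm\rmi F$ from (\ref{formula for F}), (\ref{decomposition of F})) to write $\varepsilon_{\alpha\beta\gamma\delta}(l\wedge m)^{\beta\gamma}=Z\,(l\wedge m)_{\alpha\delta}$ and then concludes by the same orthogonality contractions as before, whereas you contract $\varepsilon_{\alpha\beta\gamma\delta}$ directly with $(a\,l^\alpha+b\,m^\alpha)(l^\beta m^\gamma-m^\beta l^\gamma)$ and observe that every resulting term carries one of the two vectors $l$, $m$ in two antisymmetrised slots, hence vanishes by symmetry alone. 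Your route is slightly more elementary and robust, since it needs neither the duality property of $F$ nor the null/orthogonality relations for this half of the lemma (only that the first three slots of $T$ are spanned by two vectors); the paper's route has the mild advantage of reusing the duality and orthogonality facts already established and of exhibiting explicitly that the axial contraction reduces to the same vanishing scalars as the trace. Either way the conclusion $T^{(2)}=T^{(3)}=0$, $T=T^{(1)}$ stands, and there is no gap in your argument.
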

\begin{proof}
The first equation $T^\alpha{}_{\alpha\gamma}=0$ follows directly from equation (\ref{torsion of a generalised pp-space-general}) and the fact that we have that
$l_\alpha l^\alpha=m_\alpha l^\alpha =m_\alpha m^\alpha=0$, see section \ref{Classical pp-waves}.
The second equation $\varepsilon_{\alpha\beta\gamma\delta}T^{\alpha\beta\gamma}=0$ follows from $*F=\pm\rmi
F$, see equations (\ref{formula for F}), (\ref{decomposition of F}). We then have
\[
\varepsilon_{\alpha\beta\gamma\delta} (l\wedge m)^{\beta\gamma} = Z (l\wedge m)_{\alpha\delta},
\] where $Z\in \mathbb{C}$ is some constant. Then using the formula for torsion (\ref{torsion of a generalised pp-space-general}) we have
\[
\varepsilon_{\alpha\beta\gamma\delta}T^{\alpha\beta\gamma} = \mathrm{Re} \left(
Z (a\ l + b\ m)^\alpha  (l\wedge m)_{\alpha\delta}
\right) = 0,
\] using the same argument as before, i.e. the fact that $l_\alpha l^\alpha=m_\alpha l^\alpha =m_\alpha m^\alpha=0$. \qed
\end{proof}

Examination of formula (\ref{curvature of a generalised pp-space}) for the curvature of a generalised pp-wave
reveals the following remarkable properties of generalised pp-waves:
\begin{itemize}
\item
The curvatures generated by the Levi-Civita connection and torsion
simply add up (compare formulae (\ref{curvature of a pp-space}) and
(\ref{curvature of a generalised pp-space})).
\item
The second term in the RHS of (\ref{curvature of a generalised
pp-space}) is purely Weyl. Consequently, the Ricci curvature of a
generalised pp-wave is completely determined by the pp-metric.
\item Clearly, generalised pp-waves have the same non-zero irreducible pieces of
curvature as classical pp-waves, namely symmetric trace--free Ricci and Weyl.
Using special local coordinates (\ref{metric of a pp-wave}),
(\ref{explicit l and a}), these can be expressed explicitly as
\begin{eqnarray*}
Ric &=& \frac12 (f_{11} + f_{22}) \, l\otimes l,
\\
\mathcal{W}&=& \sum_{j,k=1}^2 w_{jk} (l\wedge m_j)\otimes(l\wedge m_k),
\end{eqnarray*}
where $m_1=\mathrm{Re } (m), m_2=\mathrm{Im }(m)$, $f_{\alpha\beta} := \partial_{\alpha}\partial_{\beta} f$ and $w_{jk}$ are real scalars given by
\begin{eqnarray}
\nonumber
w_{11} = \frac14 [-f_{11} +f_{22} + \mathrm{Re } ((h^2)'')], & w_{22} = -w_{11},
\\
\nonumber
w_{12} = \pm \frac12 f_{12}-\frac14 \mathrm{Im } ((h^2)''), & w_{21} = w_{12}.
\end{eqnarray}
Compare these to the corresponding equations (\ref{explicit W and Ric1}) and (\ref{explicit W and Ric2}) for classical pp-waves.
\item
The curvature of a generalised pp-wave has all the usual symmetries
of curvature in the Riemannian case, that is,
\begin{eqnarray}
\label{symmetries of Riemannian curvature 1}
R_{\kappa\lambda\mu\nu}=R_{\mu\nu\kappa\lambda},
\\
\label{symmetries of Riemannian curvature 2}
\varepsilon^{\kappa\lambda\mu\nu}R_{\kappa\lambda\mu\nu}=0,
\\
\label{symmetries of Riemannian curvature 3}
R_{\kappa\lambda\mu\nu}=-R_{\lambda\kappa\mu\nu},
\\
\label{symmetries of Riemannian curvature 4}
R_{\kappa\lambda\mu\nu}=-R_{\kappa\lambda\nu\mu}.
\end{eqnarray}
Of course, (\ref{symmetries of Riemannian curvature 4}) is true for
any curvature whereas (\ref{symmetries of Riemannian curvature 3})
is a consequence of metric compatibility. Also, (\ref{symmetries of
Riemannian curvature 3}) follows from (\ref{symmetries of Riemannian
curvature 1}) and (\ref{symmetries of Riemannian curvature 4}).
\item
The second term in the RHS of (\ref{plane wave}) is pure gauge in
the sense that it does not affect curvature (\ref{curvature of a
generalised pp-space}). It does, however, affect torsion
(\ref{define torsion}).
\item
The Ricci curvature of a generalised pp-wave is zero if and only if
\begin{equation}
\label{ricci is zero} f_{11}+f_{22}=0
\end{equation}
and the Weyl curvature is zero if and only if
\begin{equation}
\label{weyl is zero} f_{11}-f_{22}=\mathrm{Re}\left((h^2)''\right),
\qquad f_{12}=\pm \frac12\mathrm{Im}\left((h^2)''\right).
\end{equation}
Here we use special local coordinates (\ref{metric of a pp-wave}),
(\ref{explicit l and a}) and denote
$f_{\alpha\beta}:=\partial_\alpha\partial_\beta f$.
\item
The curvature of a generalised pp-wave is zero if and only if we
have both (\ref{ricci is zero}) and (\ref{weyl is zero}). Clearly,
for any given function $h$ we can choose a function $f$ such that
$R=0$: this $f$ is a quadratic polynomial in $x^1$, $x^2$ with
coefficients depending on $x^3$. Thus, as a spin-off, we get a class
of examples of Weitzenb\"ock spaces ($T\ne0$, $R=0$).
\end{itemize}

%%%%%%%%%%%%%%%%%%%%%%%%%%%%%%%%%%%%%%%%%%
%      SPINOR FORMALISM FOR PP WAVES     %
%%%%%%%%%%%%%%%%%%%%%%%%%%%%%%%%%%%%%%%%%%

\subsection{Spinor formalism for generalised pp--waves}
\label{SpinorFormalismPPWaves}

In this section we provide the particular spinor formalism for
generalised pp-waves. For the pp-metric
(\ref{metric of a pp-wave}) we choose Pauli matrices
\begin{eqnarray}
\label{Pauli matrices for pp-metric}
\sigma^0{}_{a\dot b}= \left(
\begin{array}{cc}
1&0\\
0&-f
\end{array}
\right), \ \sigma^1{}_{a\dot b}= \left(
\begin{array}{cc}
0&1\\
1&0
\end{array}
\right), \nonumber
\\
\sigma^2{}_{a\dot b}= \left(
\begin{array}{cc}
0&\mp\rmi\\
\pm\rmi&0
\end{array}
\right),
\sigma^3{}_{a\dot b}= \left(
\begin{array}{cc}
0&0\\
0&2
\end{array}
\right).
\end{eqnarray}
Our two choices of Pauli matrices differ by orientation. When
dealing with a classical pp-wave the choice of orientation of Pauli
matrices does not really matter, but for a generalised pp-wave it
is convenient to choose orientation of Pauli matrices in agreement
with the sign in (\ref{polarized Maxwell equation}) and
(\ref{complex curvature is polarized})
as this simplifies the resulting formulae.

\begin{remark}
In the case $f=0$, formulae (\ref{Pauli matrices for pp-metric}) do
not turn into the Minkowski space Pauli matrices, since we write the metric in the form
(\ref{metric of a pp-wave}). This is a matter of convenience in
calculations.
\end{remark}

\begin{remark}
Note that we could have chosen a different set of Pauli matrices $\sigma^\alpha{}_{a\dot
b}$ in (\ref{Pauli matrices for pp-metric}), namely with the opposite sign in every Pauli matrix,
as they are a basis in the real
vector space of Hermitian $2\times2$ matrices $\sigma_{a\dot b}$ satisfying
$\sigma^\alpha{}_{a\dot b}\sigma^{\beta c\dot b}
+\sigma^\beta{}_{a\dot b}\sigma^{\alpha c\dot b}
=2g^{\alpha\beta}\delta_a{}^c$, defined uniquely up to a Lorentz transformation. See appendix \ref{AppendixSpinorFormalism} for more
on our chosen spinor formalism.
\end{remark}
Now we want to describe the spinor connection coefficients $\Gamma^a{}_{\mu b}$, see
formula (\ref{spinor connection coefficient}) in appendix
\ref{AppendixSpinorFormalism}.
For a generalised pp-wave, the non-zero coefficients of $ \Gamma^a{}_{\mu b}$ are
\begin{equation*}
\Gamma^1{}_{12}=\frac12hh', \qquad \Gamma^1{}_{22}=\mp \frac\rmi
2hh', \qquad \Gamma^1{}_{32}= \frac12\left(\frac{\partial
f}{\partial x^1} \pm\rmi\frac{\partial f}{\partial x^2}\right)
-\frac12kh'.
\end{equation*}
Here we use special local coordinates (\ref{metric of a pp-wave}),
(\ref{explicit l and a}) and Pauli matrices (\ref{Pauli matrices for
pp-metric}). Note that with our choice of Pauli matrices the signs
in formulae (\ref{Pauli matrices for pp-metric}) and
(\ref{polarization of second order Pauli matrices}) agree.

Since by Lemma \ref{torsionPurelyTensorLemma} the torsion of a generalised pp-wave is purely tensor, the massless Dirac equation (\ref{Weyl1}), (\ref{Weyl2}) (also called Weyl's equation) takes the form
\begin{equation}
\label{Weyl equation 1} \sigma^\mu{}_{a\dot b}\nabla_\mu\,\xi^a=0,
\end{equation}
or equivalently
\begin{equation}
\label{Weyl equation 2} \sigma^\mu{}_{a\dot
b}\{\!\nabla\!\}_\mu\,\xi^a=0,
\end{equation} see appendix \ref{AppendixWeyl} for more on the massless Dirac equation.

\begin{remark}
In view of equations (\ref{Weyl equation 1}) and (\ref{Weyl equation 2}),
it is easy to see that $\chi F(\varphi)$ is a solution of the massless Dirac
equation. Here $F$ is an arbitrary function of the phase
(\ref{phase}) and $\chi$ is the parallel spinor introduced in
section \ref{Classical pp-waves}.

We also provide the explicit formulae for the `second order Pauli
matrices' (\ref{second order Pauli matrices}) for the pp-metric
(\ref{metric of a pp-wave}).
\begin{eqnarray}
\sigma^{01}{}_{ab}= \left(
\begin{array}{cc}
1&0\\
0&f
\end{array}
\right),&
\sigma^{02}{}_{ab}= \left(
\begin{array}{cc}
\mp i&0\\
0&\pm if
\end{array}
\right),&
\sigma^{03}{}_{ab}= \left(
\begin{array}{cc}
0&1\\
1&0
\end{array}
\right),  \nonumber \\
\label{second order Pauli matrices for pp-metric}
\sigma^{12}{}_{ab}= \left(
\begin{array}{cc}
0&\mp i\\
\mp i&0
\end{array}
\right),&
\sigma^{13}{}_{ab}= \left(
\begin{array}{cc}
0&0\\
0&2
\end{array}
\right),&
\sigma^{23}{}_{ab}= \left(
\begin{array}{cc}
0&0\\
0&\pm 2i
\end{array}
\right).
\end{eqnarray}
\end{remark}
Note that as the second order Pauli
matrices $\sigma^{\alpha\beta}{}_{ab}$ are antisymmetric over the
tensor indices, i.e.
$\sigma^{\alpha\beta}{}_{ab}=-\sigma^{\beta\alpha}{}_{ab}$, we only
give the independent non-zero terms.

\section{Physical interpretation of generalised pp-waves}\label{interpretation}
It was shown in our previous work \cite{pasicvassiliev} that using the generalised
pp-waves described in section \ref{PP-waves with torsion} we can construct new vacuum solutions of
quadratic metric--affine gravity. The main result of \cite{pasicvassiliev} is the following
\begin{theorem}\label{main theorem}
Generalised pp-waves of parallel Ricci curvature are solutions of
the system of equations (\ref{eulerlagrangemetric}),
(\ref{eulerlagrangeconnection}).
\end{theorem}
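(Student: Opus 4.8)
The plan is to verify directly that a generalised pp-wave of parallel Ricci curvature satisfies both Euler--Lagrange equations \eqref{eulerlagrangemetric} and \eqref{eulerlagrangeconnection}, exploiting the extremely rigid algebraic structure of the curvature given in \eqref{curvature of a generalised pp-space}. The crucial observation is that the curvature decomposes into a Levi-Civita pp-wave part plus a purely Weyl term, both of which are built from the single parallel null vector $l$ and the parallel 2-form $l\wedge m$. In particular, since the only non-vanishing irreducible pieces of $R$ are trace-free symmetric Ricci (proportional to $l\otimes l$) and Weyl (a combination of $(l\wedge m_j)\otimes(l\wedge m_k)$), the scalar curvature vanishes and many of the $16$ quadratic curvature invariants in $q(R)$ either vanish identically or collapse to a common expression. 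I would first record these degeneracies: compute $q(R)$ and its gradient using the explicit formula from Appendix B of \cite{annalen}, and show that on generalised pp-waves the Lagrangian density itself vanishes, $q(R)=0$, because every admissible contraction of two copies of a curvature of the form \eqref{curvature of a generalised pp-space} involves a contraction of $l$ with itself or of $l\wedge m$ with $l$, all of which are zero by $l_\alpha l^\alpha=l_\alpha m^\alpha=m_\alpha m^\alpha=0$ together with $*F=\pm\rmi F$.

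Next I would treat the metric equation \eqref{eulerlagrangemetric}. Because $q(R)$ is quadratic in $R$, the variation $\partial S/\partial g$ is schematically of the form ``$R$ times (second covariant derivatives of $R$)'' plus ``$R$-squared'' algebraic terms contracted with the metric. The algebraic terms vanish for the same reason $q(R)$ itself vanishes. For the derivative terms, the key point is that $R$ is linear in the profile $f$ and in $(h^2)''$, with all tensorial structure carried by the parallel objects $l$ and $l\wedge m$; hence $\{\nabla\}R$ is again a sum of terms each containing at least one factor $l$ in a position that ultimately gets contracted against another $l$ or against $l\wedge m$. Spelling this out in the special coordinates \eqref{metric of a pp-wave}, \eqref{explicit l and a} — where $l^\mu=(1,0,0,0)$, $\varphi=x^3$, and derivatives only in the $x^1,x^2,x^3$ directions occur — reduces everything to a finite check that the relevant components of $\partial S/\partial g$ are zero. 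I would organize this by the two irreducible pieces separately, using that Weyl and Ricci do not mix under the invariant contractions.

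Then I would turn to the connection equation \eqref{eulerlagrangeconnection}, which is the more delicate one since the connection is not Levi-Civita. The strategy is to use the known fact (from \cite{annalen}) that for a generic quadratic action this equation reduces, on spacetimes of the pp type, to a condition essentially equivalent to $\nabla Ric=0$ together with the vanishing of certain cubic curvature--torsion expressions; the hypothesis ``parallel Ricci curvature'' is designed precisely to kill the first. For the remaining terms I would again invoke the structural facts: torsion \eqref{torsion of a generalised pp-space-general} is purely tensor (Lemma \ref{torsionPurelyTensorLemma}), is built from $l$, $m$ and $l\wedge m$, and the curvature is \eqref{curvature of a generalised pp-space}; every contraction entering $\partial S/\partial\Gamma$ therefore pairs an $l$ with an $l$ or an $m$, or an $l\wedge m$ with an $l$, and vanishes. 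Concretely I would write out $\partial S/\partial\Gamma$ from the master formula, substitute the pp-wave $R$ and $T$, pass to the special coordinates, and verify componentwise that it is zero, using the second-order Pauli matrices \eqref{second order Pauli matrices for pp-metric} where spinor bookkeeping is convenient.

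The main obstacle I anticipate is the connection equation: unlike the metric equation, it is not protected by the vanishing of $q(R)$, so one genuinely has to handle the full $16$-parameter variational expression and track how the torsion terms interact with the non-parallel part of the curvature. The safe route is to lean heavily on the reduction already carried out in \cite{annalen} for generic quadratic actions — which isolates $\nabla Ric=0$ as the operative condition — and then to discharge the residual algebraic identities by the ``everything contracts $l$ with $l$'' mechanism in the explicit coordinates \eqref{metric of a pp-wave}, \eqref{explicit l and a}. The Weyl term $\tfrac14\mathrm{Re}((h^2)''\,(l\wedge m)\otimes(l\wedge m))$ deserves particular care, since it is the new ingredient absent from classical pp-waves; one must check that it too drops out of \eqref{eulerlagrangeconnection}, which follows from $*(l\wedge m)=\pm\rmi(l\wedge m)$ and $l\lrcorner(l\wedge m)=0$.
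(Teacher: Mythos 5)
Your overall strategy — substitute the explicit curvature \eqref{curvature of a generalised pp-space} and torsion \eqref{torsion of a generalised pp-space-general} into the explicitly written-out field equations and verify them using the special structure of generalised pp-waves — is in spirit the same as the proof the paper relies on (the theorem is quoted from \cite{pasicvassiliev}, whose explicit second field equation is the one corrected in Appendix \ref{AppendixCorrection}). However, two of your specific arguments have genuine problems. First, in metric--affine gravity the curvature $R^\kappa{}_{\lambda\mu\nu}$ is built from $\Gamma$ alone, so the variation \eqref{eulerlagrangemetric} of the purely quadratic action contains no ``$R$ times second covariant derivatives of $R$'' terms at all: it is \emph{algebraic}, quadratic in curvature with two free indices, plus $q(R)\,g^{\mu\nu}$ from the volume form. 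Your nullness argument does dispose of these algebraic terms (and $q(R)=0$ indeed holds), so this part survives, but the structure you describe is that of Riemannian quadratic gravity, not of the independent-variation problem actually being solved.

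The more serious gap is in the connection equation. The reduction $\nabla Ric=0$ of \eqref{eulerlagrangeconnectionLC16} from \cite{annalen} is the \emph{Levi-Civita specialisation}; with torsion present the relevant explicit equation (the corrected equation (27) of \cite{pasicvassiliev}, reproduced in Appendix \ref{AppendixCorrection}) contains additional $Ric\times T$ and $\mathcal{W}\times T$ terms, and these do \emph{not} all vanish by the mechanism ``every contraction pairs $l$ with $l$ or $l\wedge m$ with $l$''. Using \eqref{torsionVeryExplicit} one finds, for example, $\mathcal{W}^\eta{}_{\mu\kappa\xi}\,T_\eta{}^\xi{}_\lambda=\bigl(\sum_{j,k}w_{jk}t_{jk}\bigr)\,l_\mu l_\kappa l_\lambda$, which is generically nonzero because the contraction pairs $m_j$ with $m_p$ ($m_j\cdot m_p=-\delta_{jp}$), not $l$ with anything. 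The vanishing of the torsion--curvature block happens only through cancellation between distinct terms in the antisymmetrised combinations of the explicit field equation (e.g.\ $\mathcal{W}^\eta{}_{\mu\kappa\xi}(T_\eta{}^\xi{}_\lambda-T_\lambda{}^\xi{}_\eta)$ against $\mathcal{W}^\eta{}_{\mu\lambda\xi}(T_\kappa{}^\xi{}_\eta-T_\eta{}^\xi{}_\kappa)$), so the ``everything is null'' shortcut you propose as the safe route would fail precisely on the new ingredient you flag, and one genuinely has to carry out the componentwise verification against the correct torsionful equation — which is what the cited proof does.
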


The observation that one can construct vacuum solutions of
quadratic metric-affine gravity in terms of pp-waves is
a recent development. The fact that classical pp-spaces of parallel
Ricci curvature are solutions was first pointed out in
\cite{garda,poland,annalen}.

\begin{remark}
There is a slight error in our calculations of the explicit form of the field equations from \cite{pasicvassiliev}, which in no way influences the main result.
The error was noticed in producing \cite{skopje}, where the generalised version of the explicit field equations can be found, see Appendix \ref{AppendixCorrection} and for more details.
\end{remark}

In this section we attempt to give a
physical interpretation of the vacuum solutions of field
equations (\ref{eulerlagrangemetric}),
(\ref{eulerlagrangeconnection}) obtained in \cite{pasicvassiliev}, the notation and (partially) exposition from which we follow here. This topic was also explored very briefly and without
explicit calculations in our research review paper \cite{balkanica}. As noted in \cite{pasicvassiliev}, the following two classes of Riemannian spacetimes are solutions of our field equations:
\begin{itemize}
\item
Einstein spaces ($Ric=\Lambda g$), and
\item
classical pp-spaces of parallel Ricci curvature.
\end{itemize}
In general relativity, Einstein spaces are an accepted mathematical
model for vacuum. However, classical pp-spaces of parallel Ricci
curvature do not have an obvious physical interpretation. This
section gives an attempt at understanding whether our newly
constructed spacetimes are of mathematical or physical significance.

Our analysis of vacuum solutions of quadratic metric--affine gravity
shows that classical pp-spaces of
parallel Ricci curvature should not be viewed on their own. They are
a particular (degenerate) representative of a wider class of
solutions, namely, generalised pp-spaces of parallel Ricci
curvature. Indeed, the curvature of a generalised pp-space is a
sum of two curvatures: the curvature
\begin{equation}
\label{curvature of the underlying classical pp-space}
-\frac12(l\wedge\{\!\nabla\!\})\otimes(l\wedge\{\!\nabla\!\})f
\end{equation}
of the underlying classical pp-space and the curvature
\begin{equation}
\label{curvature generated by a torsion wave} \frac14\mathrm{Re}
\left( (h^2)''\,(l\wedge m)\otimes(l\wedge m) \right)
\end{equation}
generated by a torsion wave traveling over this classical pp-space.
Our torsion and
corresponding curvature (\ref{curvature generated by a torsion
wave}) are waves traveling at speed of light because $h$~and~$k$
are functions of the phase $\varphi$ which plays the role of a null
coordinate, $g^{\mu\nu}\nabla_\mu\varphi\,\nabla_\nu\varphi=0$.
The underlying classical pp-space of parallel
Ricci curvature can now be viewed as the `gravitational imprint'
created by a wave of some massless matter field. Such a situation
occurs in Einstein--Maxwell theory\footnote{A classical model describing the
interaction of gravitational and electromagnetic fields} and
Einstein--Weyl theory\footnote{A classical
model describing the interaction of gravitational and massless neutrino fields}.
The difference with our model is
that Einstein--Maxwell and Einstein--Weyl theories contain the
gravitational constant which dictates a particular relationship
between the strengths of the fields in question, whereas our model
is conformally invariant and the amplitudes of the two curvatures
(\ref{curvature of the underlying classical
pp-space})~and~(\ref{curvature generated by a torsion wave}) are
totally independent.

In the remainder of this
subsection we outline an argument in favour of
interpreting our torsion wave (\ref{define torsion}), (\ref{plane
wave}) as a mathematical model for some massless particle.

We base our interpretation on the analysis of the curvature
(\ref{curvature generated by a torsion wave}) generated by our
torsion wave. Examination of formula (\ref{curvature generated by a
torsion wave}) indicates that it is more convenient to deal with the
complexified curvature
\begin{equation}
\label{complexified curvature generated by a torsion wave}
\mathfrak{R}:=r\,(l\wedge m)\otimes(l\wedge m)
\end{equation}
where $r:=\frac14(h^2)''$ (this $r$ is a function of the phase
$\varphi$); note also that complexification is in line with the
traditions of quantum mechanics. Our complex curvature is polarized,
\begin{equation}
\label{complex curvature is polarized}
{}^*\mathfrak{R}=\mathfrak{R}^*=\pm\rmi\mathfrak{R}\,,
\end{equation}
and purely Weyl, hence it is equivalent to a (symmetric) rank 4
spinor~$\omega$. The relationship between $\mathfrak{R}$ and
$\omega$ is given by the formula
\begin{equation}
\label{spinor representation of curvature}
\mathfrak{R}_{\alpha\beta\gamma\delta} =\sigma_{\alpha\beta
ab}\,\omega^{abcd}\,\sigma_{\gamma\delta cd}
\end{equation}
where the $\sigma_{\alpha\beta}$ are `second order Pauli matrices'
(\ref{second order Pauli matrices}). Resolving (\ref{spinor
representation of curvature}) with respect to~$\omega$ we get, in
view of (\ref{formula for F}),
(\ref{decomposition of F}), (\ref{complexified curvature generated by a torsion wave}),
\begin{equation}
\label{formula for omega} \omega=\xi\otimes\xi\otimes\xi\otimes\xi
\end{equation}
where
\begin{equation}
\label{formula for xi} \xi:=r^{1/4}\,\chi
\end{equation}
and $\chi$ is the spinor field introduced in the beginning of
section \ref{Classical pp-waves}.

Formula (\ref{formula for omega}) shows that our rank 4 spinor
$\omega$ has additional algebraic structure: it is the 4th tensor
power of a rank 1 spinor $\xi$. Consequently, the complexified
curvature generated by our torsion wave is completely determined by
the rank 1 spinor field $\xi$.

We claim that the spinor field (\ref{formula for xi}) satisfies
the massless Dirac equation, see (\ref{Weyl equation 1}) or (\ref{Weyl equation
2}). Indeed, as $\chi$ is parallel checking that $\xi$ satisfies the
massless Dirac equation reduces to checking that
$\,(r^{1/4})'\,\sigma^\mu{}_{a\dot b}\,l_\mu\,\chi^a=0\,$. The
latter is established by direct substitution of the explicit formula (\ref{formula for l})
for $l$.

\subsection{Einstein--Weyl field equations}\label{EinsteinWeylFieldEquations}
In this section we aim to provide a reminder of
Einstein--Weyl theory and the field equations arising from this
classical model describing the interaction of gravitational and
massless neutrino fields, then to provide pp-wave type solutions within this
model, provide the previously known solutions of this type
and, finally, to compare them to the pp-wave type solutions of
our conformally invariant metric--affine model of gravity from \cite{pasicvassiliev}.

In Einstein--Weyl theory the action is given by
\begin{equation}\label{EWaction}
S_{EW}:=2i\int \left( \xi^a\,\sigma^\mu{}_{a\dot
b}\,(\{\!\nabla\!\}_\mu\overline\xi^{\dot b}) \ -\
(\{\!\nabla\!\}_\mu\xi^a)\,\sigma^\mu{}_{a\dot
b}\,\overline\xi^{\dot b} \right) + k \int \mathcal{R},
\end{equation} where the constant $k$
can be chosen so that the non-relativistic limit yields the usual
form of Newton's gravity law. According to Brill and Wheeler
\cite{wheeler}, $\displaystyle
k = \frac{c^4}{16\pi G},$
where $G$ is the gravitational constant.
\begin{remark}
Note that in Einstein--Weyl theory the connection is assumed to be
Levi-Civita, so we only vary the action (\ref{EWaction}) with
respect to the metric and the spinor.
\end{remark} We obtain the well known Einstein--Weyl field equations
\begin{equation}\label{firstEL}
\frac{\delta S_{EW}}{\delta g}=0,
\end{equation}
\begin{equation}\label{secondEL}
\frac{\delta S_{EW}}{\delta \xi}=0.
\end{equation}

The first term of the action $S$ depends on the spinor $\xi$ and the
metric $g$ while the second depends on the metric $g$ only. Hence
the formal variation of the action (\ref{EWaction}) with respect to
the spinor just yields the massless Dirac equation, see appendix
\ref{AppendixWeyl}.

The variation with respect to the metric is somewhat more
complicated as both terms of the action (\ref{EWaction}) depend on
the metric $g$. The variation of the first term of the action with
respect to the metric yields the energy momentum tensor of the Weyl action
(\ref{neutrinoAction}), i.e.

\begin{eqnarray}
E^{\mu\nu}&=&\frac{i}{2}\left[ \sigma^{\nu}{}_{a\dot b}
\left(\overline\xi^{\dot
b}\{\!\nabla\!\}^\mu\xi^a-\xi^a\{\!\nabla\!\}^\mu\overline\xi^{\dot
b}\right) +\sigma^{\mu}{}_{a\dot b} \left(\overline\xi^{\dot
b}\{\!\nabla\!\}^\nu\xi^a-\xi^a\{\!\nabla\!\}^\nu\overline\xi^{\dot
b}\right) \right] \nonumber
\\ \label{EMT}
&+&i\left( \xi^a\,\sigma^\eta{}_{a\dot
b}\,(\{\!\nabla\!\}_\eta\overline\xi^{\dot b})g^{\mu\nu} -
(\{\!\nabla\!\}_\eta\xi^a)\,\sigma^\eta{}_{a\dot
b}\,\overline\xi^{\dot b}g^{\mu\nu}\right).
\end{eqnarray} Note that the energy momentum
tensor is not a priori trace-free. Please see appendix \ref{EnergyMomentumTensorAppendix} for the detailed derivation of formula (\ref{EMT}).

Variation with respect to the metric of the Einstein--Hilbert term
of the action yields
\[
\delta\int \mathcal{R} = -
\int(Ric^{\mu\nu}-\frac12\mathcal{R}g^{\mu\nu})\delta g_{\mu\nu},
\] which is a straightforward calculation, see e.g. Landau and
Lifshitz \cite{LL2}. Hence we get the explicit representation of the Einstein--Weyl field
equations (\ref{firstEL}), (\ref{secondEL}):
\begin{eqnarray}
\frac{i}{2}\!\left[ \sigma^{\nu}{}_{a\dot b}
\left(\overline\xi^{\dot b}
\{\!\nabla\!\}^\mu\xi^a-\xi^a\{\!\nabla\!\}^\mu\overline\xi^{\dot b}
\right) \!+\!\sigma^{\mu}{}_{a\dot b} \left(\overline\xi^{\dot b}
\{\!\nabla\!\}^\nu\xi^a-\xi^a\{\!\nabla\!\}^\nu\overline\xi^{\dot b}
\right) \right] & & \nonumber
\\
\label{EWexplicit1} +i\left( \xi^a\,\sigma^\eta{}_{a\dot
b}\,(\{\!\nabla\!\}_\eta\overline\xi^{\dot b})g^{\mu\nu} -
(\{\!\nabla\!\}_\eta\xi^a)\,\sigma^\eta{}_{a\dot
b}\,\overline\xi^{\dot b}g^{\mu\nu}\right) & & \nonumber
\\
- k Ric^{\mu\nu}+\frac k2\mathcal{R}g^{\mu\nu}&=&0,
\\
\label{EWexplicit2} \sigma^\mu{}_{a\dot b}\{\! \nabla\! \}_\mu\,\xi^a
&=&0.
\end{eqnarray}
\begin{remark}\label{remarkEMT}
When the equation (\ref{EWexplicit2}) is satisfied, we have
that the energy--momentum tensor (\ref{EMT}) is trace free
and the second line of (\ref{EWexplicit1}) vanishes, see e.g. end of section 2 of
Griffiths and Newing \cite{griffiths1970_2}.
\end{remark}

\subsubsection{Known solutions of Einstein--Weyl theory}\label{EWComparisonExisting}

In one of the early works on this subject, Griffiths and Newing \cite{griffiths1970_1} show how the
solutions of Einstein--Weyl equations can be constructed and present five examples of solutions and a later work by the same authors \cite{griffiths1970_2} presents a more general solution of Kundt's class.
Audretsch and Graf \cite{audretsch3} derive a differential equation representing radiation solutions of the general relativistic Weyl's equation and study the corresponding energy-momentum tensor and they present an exact solution of Einstein--Weyl equations in the form of pp-waves. Audretsch  \cite{audretsch2} continues to study the asymptotic behaviour of the neutrino energy-momentum
tensor in curved space-time with the sole aid of generally covariant assumptions about the nature of the Weyl field and the author shows that these Weyl fields behave asymptotically like neutrino radiation.

Griffiths \cite{griffiths1972_1} expanded on his previous work and this paper is of particular interest to us as in section 5 of \cite{griffiths1972_1} the author presents solutions whose metric is the pp-wave metric (\ref{metric of a pp-wave}) and the author presents a condition on the function $f$ from the pp-metric (\ref{metric of a pp-wave}).
Griffiths \cite{griffiths1972_2} identifies a class of neutrino fields with zero energy momentum tensor and stipulates that these spacetimes may also be interpreted as describing gravitational waves. Collinson and Morris \cite{collinsonmorris} showed that these could be either pp-waves or Robinson-Trautman type N solutions presented in \cite{griffiths1970_1}. Subsequently these were called `\emph{ghost neutrinos}' by Davis and Ray in \cite{davisray}.

Kuchowicz and \. Zebrowski \cite{kuchowitz} expand on the work on ghost neutrinos trying to resolve this anomaly by considering non-zero torsion in the framework of Einstein--Cartan theory.
Griffiths \cite{griffiths1980} also considers the possibility of non-zero torsion and in a more general work \cite{griffiths1981} he showed that neutrino fields in Einstein-Cartan theory must have metrics that belong to the family of solutions of Kundt's class, which include the pp-waves. Singh and Griffiths \cite{singhgriffiths} corrected several mistakes from \cite{griffiths1981} and showed that neutrino fields in Einstein--Cartan theory also include the Robinson--Trautman type N solutions and that any solution of the Einstein--Weyl equations in general relativity has a corresponding solution in Einstein--Cartan theory. Thus pp-wave type solutions of Einstein--Weyl equations have corresponding solutions in Einstein--Cartan theory. This paper was one of the main inspirations behind the result in section \ref{ppwaveEW}.

\subsection{PP--wave type solutions of Einstein--Weyl theory}\label{ppwaveEW}
The aim of this section is to point out the fact that the
nonlinear system of equations (\ref{EWexplicit1}),
(\ref{EWexplicit2}) has solutions in the form of pp-waves.
Throughout this section we use the set of local coordinates
(\ref{metric of a pp-wave}), (\ref{explicit l and a}) and Pauli
matrices (\ref{Pauli matrices for pp-metric}). We now present a
class of explicit solutions of (\ref{EWexplicit1}),
(\ref{EWexplicit2}) where the metric $g$ is in the form of a
pp-metric and the spinor $\xi$ as in (\ref{formula for xi}). As
shown in the section \ref{interpretation}, the
spinor (\ref{formula for xi}) satisfies the equation
(\ref{EWexplicit2}). In the setting of a pp-space scalar curvature
vanishes and as the spinor $\chi$ appearing in formula (\ref{formula
for xi}) is parallel, in view of Remark \ref{remarkEMT} equation (\ref{EWexplicit1}) becomes
\begin{eqnarray}\nonumber
\frac{i}{2} \sigma^{\nu}{}_{a\dot b} \left(\overline\xi^{\dot
b}\{\!\nabla\!\}^\mu\xi^a-\xi^a\{\!\nabla\!\}^\mu\overline\xi^{\dot
b}\right) +\frac{i}{2}\sigma^{\mu}{}_{a\dot b}
\left(\overline\xi^{\dot
b}\{\!\nabla\!\}^\nu\xi^a\right.&-&\left.\xi^a\{\!\nabla\!\}^\nu\overline\xi^{\dot
b}\right) \\ \label{EW1evenMoreexplicit} &-& k Ric^{\mu\nu}  =  0.
\end{eqnarray}

We now need to determine under what condition the equation
(\ref{EW1evenMoreexplicit}) is satisfied. In our local coordinates,
we have
\begin{equation}\label{RicEW}
Ric = \left(\frac12 \frac{\partial^2 f}{\partial (x^1)^2} + \frac12
\frac{\partial^2 f}{\partial (x^2)^2}\right) (l\otimes l).
\end{equation}
Substituting formulae (\ref{formula for xi}), (\ref{RicEW}) into
equation (\ref{EW1evenMoreexplicit}), and using the fact that the
spinor $\chi$ is parallel, we obtain the equality
\[
i (\sigma^{\nu}{}_{a\dot b}l^\mu +\sigma^{\mu}{}_{a\dot b}l^\nu)\!\!
\left((r^{1/4})'\ \overline{r^{1/4}}-r^{1/4}\
(\overline{r^{1/4}})'\right)\chi^a\overline{\chi}^{\dot b}= \frac12
k l^\mu l^\nu \!\left(\frac{\partial^2 f}{\partial (x^1)^2} +
\frac{\partial^2 f}{\partial (x^2)^2}\right).
\]
Since we know that  $\sigma^\mu{}_{a\dot b}\chi^a\overline{\chi}^{\dot b}=l^\mu$,
we obtain the condition for a pp-wave type solution of the Einstein--Weyl model
\begin{equation}\label{condition on s}
\frac12 \frac{\partial^2 f}{\partial (x^1)^2} + \frac12
\frac{\partial^2 f}{\partial (x^2)^2} = \frac{i}{k}\left((r^{1/4})'\
\overline{r^{1/4}}-r^{1/4}\ (\overline{r^{1/4}})'\right).
\end{equation}
Thus, the complex valued function $r$ of one real variable $x^3$ can
be chosen arbitrarily and it uniquely determines the RHS of
(\ref{condition on s}). From (\ref{condition on s}) one recovers the
pp-metric by solving Poisson's equation.

\subsection{Comparison of metric--affine and Einstein--Weyl solutions}
\label{comparisonSection}
To make our comparison clearer, let us compare these models in the
case of mono--chromatic solutions of both models using local
coordinates (\ref{metric of a pp-wave}), (\ref{explicit l and a})
and Pauli matrices (\ref{Pauli matrices for pp-metric}).

\subsubsection{Monochromatic metric--affine solutions}
In the case of the metric--affine model, from Theorem \ref{main
theorem} we know that generalised pp-waves of parallel Ricci
curvature are solutions of the equations
(\ref{eulerlagrangemetric}), (\ref{eulerlagrangeconnection}).
Whether we are viewing monochromatic solutions or not, the condition
on the solution of the model remains unchanged, namely Ricci
curvature (\ref{RicEW}) has to be parallel. In our special local
coordinates the condition of parallel Ricci curvature is written as
\begin{equation}\label{conditionOnSolutionsMAG}
\frac12 \frac{\partial^2 f}{\partial (x^1)^2} + \frac12
\frac{\partial^2 f}{\partial (x^2)^2} = C,
\end{equation} where $C$ is an arbitrary real constant.
However, the construction of torsion simplifies in the monochromatic
case. Namely, we can choose the function $h$ of the phase
(\ref{phase}) so that the plane wave (\ref{plane wave}) becomes
\[
A = \frac{ic^2}{2a} \ e^{2i(ax^3 +b)}\ m,
\] where $a,b,c \in \mathbb{R}$, $a\ne0$.
Torsion (\ref{define torsion}) then takes the form
\[
T  =  -\frac{c^4}{4a}\textrm{Re}\left(ie^{4i(ax^3+b)}
m\otimes(l\wedge m)\right).
\]
Hence the complexified curvature (\ref{complexified curvature
generated by a torsion wave}) generated by the torsion wave becomes
\[
\mathfrak{R} = c^4 e^{4i(ax^3 +b)}(l\wedge m) \otimes (l\wedge m),
\]
and $r$ from (\ref{complexified curvature generated by a torsion
wave}) becomes
\[
r = c^4 e^{4i(ax^3 +b)}.
\]
The spinor $\xi$ from (\ref{formula for xi}) is
explicitly given by
\begin{equation}
\label{dima1}
\xi = c \left( \begin{array}{c}1 \\ 0\end{array}\right)
e^{i(ax^3+b)}.
\end{equation}

\subsubsection{Monochromatic Einstein--Weyl solutions}
Let us now look for monochromatic solutions in Einstein--Weyl theory.
We take the spinor field as in formula (\ref{dima1}) in which case
condition (\ref{condition on s}) simplifies to
\begin{equation}\label{conditionOnSolutionsEW}
\frac12 \frac{\partial^2 f}{\partial (x^1)^2} + \frac12
\frac{\partial^2 f}{\partial (x^2)^2} = - \frac{2ac^2}{k}.
\end{equation}

\subsubsection{Comparison of monochromatic metric--affine and Einstein--Weyl solutions}
The main difference between the two models is that in the
metric--affine model our generalised pp-wave solutions  have
parallel Ricci curvature, whereas in the Einstein--Weyl model the
pp-wave type solutions do not necessarily have parallel Ricci
curvature. However, when we look at monochromatic pp-wave type
solutions in the Einstein--Weyl model their Ricci curvature also
becomes parallel. The only remaining difference is in the right-hand
sides of equations (\ref{conditionOnSolutionsMAG}) and
(\ref{conditionOnSolutionsEW}): in (\ref{conditionOnSolutionsMAG})
the constant is arbitrary whereas in (\ref{conditionOnSolutionsEW})
the constant is expressed via the characteristics of the spinor wave
and the gravitational constant.

In other words, comparing equations (\ref{conditionOnSolutionsMAG})
and (\ref{conditionOnSolutionsEW}) we see that while in the
metric--affine case the Laplacian of $f$ can be \emph{any} constant,
in the Einstein--Weyl case it is required to be a \emph{particular}
constant. This should not be surprising as our metric--affine model
is conformally invariant,
while the Einstein--Weyl model is not.

We also want to clarify that $f$ and the quantities $a,b,c$ appearing in this
section \ref{comparisonSection} are generally arbitrary functions of the null coordinate $x^3$.
As such, if these quantities are non-zero only for a
short finite interval of $x^3$, the solutions represent spinors, curvature
and torsion components which propagate at the speed of light.

Hence we can conclude that
generalised pp-waves of parallel Ricci curvature are very
similar to pp-type solutions of the Einstein--Weyl model, which is a
classical model describing the interaction of massless neutrino and
gravitational fields. Therefore we suggest that
\begin{center}
\emph{Generalised pp-waves of parallel Ricci curvature represent a
metric--affine model for the massless neutrino.}
\end{center}

\begin{acknowledgements}
The authors are very grateful to D Vassiliev, J B Griffiths and F W Hehl for helpful advice and to the Ministry of Education and Science of the Federation of Bosnia and Herzegovina, which
supported our research.
\end{acknowledgements}

\appendix
\section{Spinor Formalism}\label{AppendixSpinorFormalism}

This appendix provides the spinor formalism used throughout our work.
Unless otherwise stated, we work in a general metric
compatible spacetime with torsion. When introducing our spinor formalism, we were faced with the
problem that there doesn't seem to exist a uniform convention in the
existing literature on how to treat spinors. Optimally, we would
have wanted to achieve the following:

\begin{enumerate}
\item[(i)] consecutive raising and lowering of a spinor index does not
change the sign of a rank $1$ spinor;
\item[(ii)] the metric spinor $\epsilon^{ab}$ is the raised version
of $\epsilon_{ab}$ and vice versa;
\item[(iii)] the spinor inner product is invariant under raising and
lowering of indices, i.e. $\xi_a \eta^a = \xi^a \eta_a $.
\end{enumerate}

Unfortunately, it becomes clear that it is not possible to satisfy all
three desired properties, as shown in \cite{pirani}. This
inconsistency is related to the well known fact (see for example
Section 19 in \cite{LL4} or Section 3--5 in \cite{StreaterWightman}),
that a spinor does not have a particular sign -- for example, a
spatial rotation of the coordinate system by $2\pi$ leads to a
change of sign. Also see \cite{penroserindler} for more helpful
insight about the problem of choice of the spinor formalism, as well as e.g.
\cite{LL4,blagojevic,buchbinderandkuzenko,griffiths1981,griffiths3,pirani} for
insight to various approaches to spinor formalism.

We decided to define our spinor formalism in the following way.
We define the `metric spinor' as
\begin{equation}
\label{metric spinor} \epsilon_{ab}=\epsilon_{\dot a\dot b}=
\epsilon^{ab}=\epsilon^{\dot a\dot b}= \left(
\begin{array}{cc}
0&1\\
-1&0
\end{array}
\right)
\end{equation}
with the first index enumerating rows and the second enumerating
columns. We raise and lower spinor indices according to the formulae
\begin{equation}
\label{raising and lowering of spinor indices}
\xi^a=\epsilon^{ab}\xi_b, \qquad \xi_a=\epsilon_{ab}\xi^b, \qquad
\eta^{\dot a}=\epsilon^{\dot a\dot b}\eta_{\dot b}, \qquad
\eta_{\dot a}=\epsilon_{\dot a\dot b}\eta^{\dot b}.
\end{equation}

Our definition (\ref{metric spinor}), (\ref{raising and lowering of
spinor indices}) has the following advantages:
\begin{itemize}
\item
The spinor inner product is invariant under the operation of raising
and lowering of indices, i.e.
$(\epsilon_{ac}\xi^c)(\epsilon^{ad}\eta_d)=\xi^a\eta_a$.
\item
The `contravariant' and `covariant' metric spinors are
`raised' and `lowered' versions of each other, i.e.
$\epsilon^{ab}=\epsilon^{ac}\epsilon_{cd}\epsilon^{bd}$ and
$\epsilon_{ab}=\epsilon_{ac}\epsilon^{cd}\epsilon_{bd}$.
\end{itemize}
The disadvantage of our definition (\ref{metric spinor}),
(\ref{raising and lowering of spinor indices}) is that the
consecutive raising and lowering of a single spinor index leads to a
change of sign, i.e. $\epsilon_{ab}\epsilon^{bc}\xi_c=-\xi_a$. In
formulae where the sign is important we will be careful in
specifying our choice of sign; see, for example, (\ref{definition of
contravariant Pauli matrices}), (\ref{spinor connection
coefficient}). We in a sense intentionally `sacrificed' this
property in order to guarantee that the other two properties, which
in our view have greater physical significance, are satisfied.

Let $\mathfrak{v}$ be the real vector space of Hermitian $2\times2$
matrices $\sigma_{a\dot b}$. Pauli matrices $\sigma^\alpha{}_{a\dot
b}$, $\alpha=0,1,2,3$, are a basis in $\mathfrak{v}$ satisfying
$\sigma^\alpha{}_{a\dot b}\sigma^{\beta c\dot b}
+\sigma^\beta{}_{a\dot b}\sigma^{\alpha c\dot b}
=2g^{\alpha\beta}\delta_a{}^c$ where
\begin{equation}
\label{definition of contravariant Pauli matrices} \sigma^{\alpha
a\dot b}:= \epsilon^{ac}\sigma^\alpha{}_{c\dot d}\epsilon^{\dot
b\dot d}.
\end{equation}
At every point of the manifold $M$ Pauli matrices are defined
uniquely up to a Lorentz transformation.
Define
\begin{equation}
\label{second order Pauli matrices} \sigma_{\alpha\beta ac}:=\frac12
\bigl( \sigma_{\alpha a\dot b}\epsilon^{\dot b\dot d}\sigma_{\beta
c\dot d} - \sigma_{\beta a\dot b}\epsilon^{\dot b\dot
d}\sigma_{\alpha c\dot d} \bigr)\,.
\end{equation}
These `second order Pauli matrices' are polarized, i.e.
\begin{equation}
\label{polarization of second order Pauli matrices}
*\sigma=\pm i\sigma
\end{equation}
depending on the orientation of `basic' Pauli matrices
$\sigma^\alpha{}_{a\dot b}$, $\alpha=0,1,2,3$.

We define the covariant derivatives of spinor fields as
\[
\nabla_\mu\xi^a=\partial_\mu\xi^a+\Gamma^a{}_{\mu b}\xi^b, \qquad
\nabla_\mu\xi_a=\partial_\mu\xi_a-\Gamma^b{}_{\mu a}\xi_b,
\]
\[
\nabla_\mu\eta^{\dot a}=\partial_\mu\eta^{\dot a} +\bar\Gamma^{\dot
a}{}_{\mu\dot b}\eta^{\dot b}, \qquad \nabla_\mu\eta_{\dot
a}=\partial_\mu\eta_{\dot a} -\bar\Gamma^{\dot b}{}_{\mu\dot
a}\eta_{\dot b},
\]
where $\bar\Gamma^{\dot a}{}_{\mu\dot b}=\overline{\Gamma^a{}_{\mu
b}}$. The explicit formula for the spinor connection coefficients
$\Gamma^a{}_{\mu b}$ can be derived from the following two
conditions:
\begin{equation}
\label{condition 12} \nabla_\mu\epsilon^{ab}=0, \quad
\nabla_\mu j^\alpha=\sigma^\alpha{}_{a\dot
b}\nabla_\mu\zeta^{a\dot b},
\end{equation}
where $\zeta$ is an arbitrary rank 2 mixed spinor field and
$j^\alpha:=\sigma^\alpha{}_{a\dot b}\zeta^{a\dot b}$ is the
corresponding vector field (current). Conditions (\ref{condition
12}) give a system of linear algebraic equations
for $\mathrm{Re}\,\Gamma^a{}_{\mu b}$, $\mathrm{Im}\,\Gamma^a{}_{\mu
b}$ the unique solution of which is
\begin{equation}
\label{spinor connection coefficient} \Gamma^a{}_{\mu b}=\frac14
\sigma_\alpha{}^{a\dot c} \left(
\partial_\mu\sigma^\alpha{}_{b\dot c}
+\Gamma^\alpha{}_{\mu\beta}\sigma^{\beta}{}_{b\dot c} \right).
\end{equation}
See section 3 of \cite{griffiths1970_1} for more
background on covariant differentiation of spinors.

\section{Massless Dirac Equation}\label{AppendixWeyl} The generally accepted
point of view \cite{griffiths1981,hehl habilitation,hehl
neutrino 1,hehl neutrino 2,hehl neutrino 3} is that a massless neutrino
field is a metric compatible spacetime with or without torsion,
described by the action
\begin{equation}\label{neutrinoAction}
S_\mathrm{neutrino}:=2i\int \Bigl( \xi^a\,\sigma^\mu{}_{a\dot
b}\,(\nabla_\mu\bar\xi^{\dot b}) \ -\
(\nabla_\mu\xi^a)\,\sigma^\mu{}_{a\dot b}\,\bar\xi^{\dot b} \Bigr),
\end{equation} see formula (11) of \cite{griffiths1981}.
We first vary the action (\ref{neutrinoAction}) with respect to the
spinor $\xi$, while keeping torsion and the metric \emph{fixed}. A
straightforward calculation produces the \emph{massless Dirac (or Weyl's)} equation
\begin{equation}\label{Weyl1}
\sigma^\mu{}_{a\dot b}\nabla_\mu\,\xi^a
-\frac12T^\eta{}_{\eta\mu}\sigma^\mu{}_{a\dot b}\,\xi^a=0,
\end{equation}
which can be equivalently rewritten as
\begin{equation}\label{Weyl2}
\sigma^\mu{}_{a\dot b}\{\!\nabla\!\}_\mu\,\xi^a \pm\frac\rmi 4
\varepsilon_{\alpha\beta\gamma\delta}T^{\alpha\beta\gamma}
\sigma^\delta{}_{a\dot b}\,\xi^a=0.
\end{equation}

\subsection{Energy momentum tensor}\label{EnergyMomentumTensorAppendix}

In this subsection we give the derivation of the energy momentum
tensor of the action $S_\mathrm{neutrino}$, where we vary the metric
keeping the spinor fixed. The covariant and contravariant metric change in the following way
\begin{equation}\label{VariationOfCovariantMetric}
g_{\alpha\beta}\mapsto g_{\alpha\beta} + \delta g_{\alpha\beta}, \qquad
g^{\alpha\beta}\mapsto g^{\alpha\beta} -g^{\alpha\alpha'}(\delta
g_{\alpha'\beta'})g^{\beta\beta'},
\end{equation} while the Pauli matrices transform in the
following way
\begin{equation}\label{SymmetricVariationOfCovariantandContravariantPauliMatrices}
\sigma_{\alpha}\mapsto \sigma_{\alpha} + \frac12\delta
g_{\alpha\beta} g^{\beta\gamma}\sigma_{\gamma}, \ \quad \ \sigma^\alpha \mapsto \sigma^\alpha - \frac12g^{\alpha\beta}(\delta
g_{\beta\gamma}) \sigma^{\gamma}.
\end{equation}
Formulae describe a `symmetric' variation of the Pauli matrices
caused by the (symmetric) variation of the (symmetric) metric.

\begin{remark}\label{EMTsimple}
We do most of the following calculations
under the assumption that the metric is the Minkow--ski metric
$g_{\mu\nu}=\textrm{diag} (1,-1,-1,-1)$ and that the
connection is Levi-Civita.
\end{remark}
Now we need to look at the $\delta \Gamma^{\alpha}{}_{\beta\gamma}$.
Using the definition of the Levi-Civita connection,
equation (\ref{VariationOfCovariantMetric}) and
metric compatibility ($\nabla g \equiv 0$),  we get that
the connection transforms as
\begin{equation}\label{VariationOfLeviCivitaConnection}
\delta \Gamma^{\kappa}{}_{\mu\nu} =
\frac12g^{\kappa\lambda}(\nabla_\mu \delta g_{\lambda\nu}
+\nabla_\nu \delta g_{\lambda\mu}-\nabla_\lambda \delta g_{\mu\nu}).
\end{equation}
\begin{lemma}
The variation of the covariant derivative of $\xi$ with respect to
the metric is
\begin{equation}\label{VariationOfCovariantDerivative}
\delta\nabla_\mu\xi^a =\frac18(\sigma_\alpha{}^{a\dot
d}\sigma^{\beta}{}_{c\dot d} -\sigma^\beta{}^{a\dot
d}\sigma_{\alpha}{}_{c\dot d})
\xi^c\delta\Gamma^\alpha{}_{\mu\beta}.
\end{equation}
\end{lemma}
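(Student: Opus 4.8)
The plan is to compute the variation directly from the definition of the spinor covariant derivative, $\nabla_\mu\xi^a=\partial_\mu\xi^a+\Gamma^a{}_{\mu b}\xi^b$, keeping the spinor $\xi$ fixed so that $\delta(\partial_\mu\xi^a)=0$ and hence $\delta\nabla_\mu\xi^a=(\delta\Gamma^a{}_{\mu b})\,\xi^b$. The whole problem therefore reduces to computing $\delta\Gamma^a{}_{\mu b}$, the variation of the spinor connection coefficients, in terms of the variation $\delta\Gamma^\alpha{}_{\mu\beta}$ of the affine (Levi-Civita) connection coefficients. For this I would start from the explicit formula (\ref{spinor connection coefficient}),
\[
\Gamma^a{}_{\mu b}=\frac14\sigma_\alpha{}^{a\dot c}\left(\partial_\mu\sigma^\alpha{}_{b\dot c}+\Gamma^\alpha{}_{\mu\beta}\sigma^\beta{}_{b\dot c}\right),
\]
and vary it term by term.

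First I would invoke Remark \ref{EMTsimple}: we work at a point where the metric is Minkowski and the connection is Levi-Civita, so at that point $\partial_\mu\sigma^\alpha{}_{b\dot c}=0$ and $\Gamma^\alpha{}_{\mu\beta}=0$ for the background configuration. This kills the background value of the bracket and drastically simplifies the variation: the only surviving contributions are those in which the variation hits the factor that is itself being differentiated or the explicit $\Gamma$. Concretely, $\delta\Gamma^a{}_{\mu b}$ picks up (i) a term from $\delta(\partial_\mu\sigma^\alpha{}_{b\dot c})=\partial_\mu(\delta\sigma^\alpha{}_{b\dot c})$, where $\delta\sigma^\alpha$ is read off from (\ref{SymmetricVariationOfCovariantandContravariantPauliMatrices}), and (ii) a term $\frac14\sigma_\alpha{}^{a\dot c}(\delta\Gamma^\alpha{}_{\mu\beta})\sigma^\beta{}_{b\dot c}$ from varying the explicit connection. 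One then checks that, after using (\ref{VariationOfLeviCivitaConnection}) to express $\delta\Gamma^\alpha{}_{\mu\beta}$ through $\nabla\delta g$, the derivative term (i) combines with part of (ii); the symmetric piece of $\delta\Gamma^\alpha{}_{\mu\beta}$ (the part symmetric in the relevant Pauli indices) drops out upon contraction, leaving precisely the antisymmetrised combination $\sigma_\alpha{}^{a\dot d}\sigma^\beta{}_{c\dot d}-\sigma^\beta{}^{a\dot d}\sigma_\alpha{}_{c\dot d}$ contracted with $\delta\Gamma^\alpha{}_{\mu\beta}$, with overall coefficient $\tfrac18$. Raising/lowering of the spinor index $\dot d$ via $\epsilon^{\dot b\dot d}$ (cf. (\ref{second order Pauli matrices})) is what converts the naive $\tfrac14\,\sigma_\alpha{}^{a\dot c}\sigma^\beta{}_{b\dot c}$ into the antisymmetrised $\tfrac18$ form, so I would be careful with the sign conventions of Appendix \ref{AppendixSpinorFormalism} at that step.

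The main obstacle I anticipate is precisely this bookkeeping of the index gymnastics: getting the factor $\tfrac18$ rather than $\tfrac14$, and the relative sign inside the antisymmetriser, requires using the defining relation $\sigma^\alpha{}_{a\dot b}\sigma^{\beta c\dot b}+\sigma^\beta{}_{a\dot b}\sigma^{\alpha c\dot b}=2g^{\alpha\beta}\delta_a{}^c$ together with the $\epsilon$-contraction identities, and tracking how $\delta g$ enters through both $\delta\sigma^\alpha$ and $\delta\Gamma^\alpha{}_{\mu\beta}$ so that the $g$-dependence cancels in the symmetric part. A clean way to organise this is to note that the combination $\tfrac12(\sigma_\alpha{}^{a\dot d}\sigma^\beta{}_{c\dot d}-\sigma^\beta{}^{a\dot d}\sigma_\alpha{}_{c\dot d})$ appearing in (\ref{VariationOfCovariantDerivative}) is, up to index raising, the second-order Pauli matrix $\sigma_\alpha{}^\beta$ from (\ref{second order Pauli matrices}); rewriting everything in terms of $\sigma_{\alpha\beta}$ makes the antisymmetry in $\alpha,\beta$ manifest and explains why only the antisymmetric part of $\delta\Gamma^\alpha{}_{\mu\beta}$ — equivalently, why the final answer can be written covariantly — survives. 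Once the algebra is arranged this way the identity (\ref{VariationOfCovariantDerivative}) follows, and since both sides are tensorial in the remaining free indices the result extends from the special Minkowski/Levi-Civita point back to the general metric-compatible setting by covariance.
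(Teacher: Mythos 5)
Your proposal follows essentially the same route as the paper's proof: since $\xi$ is fixed, $\delta\nabla_\mu\xi^a=(\delta\Gamma^a{}_{\mu b})\xi^b$, and you vary the explicit formula (\ref{spinor connection coefficient}) under the simplifications of Remark \ref{EMTsimple} (so the variation of the outer Pauli matrix drops out), using (\ref{SymmetricVariationOfCovariantandContravariantPauliMatrices}) for $\delta\sigma^\alpha$ and the relation between $\partial_\mu\delta g$ and $\delta\Gamma^\alpha{}_{\mu\beta}$ from (\ref{VariationOfLeviCivitaConnection}) to land on the antisymmetrised combination with the factor $\tfrac18$, exactly as the paper does. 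The only cosmetic difference is the direction of substitution (the paper rewrites $\partial_\mu(\delta\sigma^\alpha{}_{c\dot d})$ directly in terms of $\delta\Gamma^\alpha{}_{\mu\beta}$ rather than passing through $\nabla\delta g$), which does not change the argument.
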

\begin{proof}
Using equation (\ref{spinor connection coefficient}), the fact that $\xi$ does not contribute to the variation
and the assumptions in Remark \ref{EMTsimple}, we obtain
\[
4\delta\nabla_\mu\xi^a = \sigma_\alpha{}^{a\dot d}\left(
\partial_\mu(\delta\sigma^\alpha{}_{c\dot d})
+(\delta\Gamma^\alpha{}_{\mu\beta})\sigma^{\beta}{}_{c\dot d}
\right)\xi^c.
\]
Using equation
(\ref{VariationOfCovariantMetric}) and metric compatibility
we get that
\[
\partial_\mu(\delta\sigma^\alpha{}_{c\dot d})=
-\frac12 g^{\alpha\eta}\sigma_{\zeta}{}_{c\dot d}\ \delta
\Gamma^{\zeta}{}_{\mu\eta} -\frac12 \delta^\alpha{}_{\zeta}\
\sigma^{\xi}{}_{c\dot d}\ \delta\Gamma^{\zeta}{}_{\mu\xi}.
\]
Combining this with the formula for the variation of $\nabla \xi$,
we get the equivalent to equation (\ref{VariationOfCovariantDerivative})
\[
4\delta\nabla_\mu\xi^a = -\frac12\sigma^\beta{}^{a\dot d}
\sigma_{\alpha}{}_{c\dot d}\ \xi^c \delta
\Gamma^{\alpha}{}_{\mu\beta} -\frac12\sigma_\alpha{}^{a\dot d} \
\sigma^{\beta}{}_{c\dot d}\ \xi^c \delta\Gamma^{\alpha}{}_{\mu\beta}
+\sigma^{\beta}{}_{c\dot
d}\sigma_\alpha{}^{a\dot d}\xi^c \delta\Gamma^\alpha{}_{\mu\beta}.
\] \qed
\end{proof}
We now combine equations
(\ref{VariationOfLeviCivitaConnection}) and
(\ref{VariationOfCovariantDerivative}) to get
\[
\delta\nabla_\mu\xi^a =\frac1{16}(\sigma^\lambda{}^{a\dot
d}\sigma^{\beta}{}_{c\dot d} -\sigma^\beta{}^{a\dot
d}\sigma^{\lambda}{}_{c\dot d}) \xi^c\ (\partial_\mu \delta
g_{\lambda\beta} +\partial_\beta \delta g_{\lambda\mu}
-\partial_\lambda \delta g_{\mu\beta}).
\]
As the first derivative is symmetric over $\lambda,\beta$ and the
Pauli matrices are antisymmetric over these indices, we get
\[
(\sigma^\lambda{}^{a\dot d}\sigma^{\beta}{}_{c\dot d}
-\sigma^\beta{}^{a\dot d}\sigma^{\lambda}{}_{c\dot d})\partial_\mu
\delta g_{\lambda\beta}= -(\sigma^\lambda{}^{a\dot
d}\sigma^{\beta}{}_{c\dot d} -\sigma^\beta{}^{a\dot
d}\sigma^{\lambda}{}_{c\dot d})\partial_\mu \delta g_{\beta\lambda}
=0.
\] Hence,
\[
\delta\nabla_\mu\xi^a =\frac1{16}(\sigma^\lambda{}^{a\dot
d}\sigma^{\beta}{}_{c\dot d} -\sigma^\beta{}^{a\dot
d}\sigma^{\lambda}{}_{c\dot d}) \xi^c \partial_\beta \delta
g_{\lambda\mu}\ -\frac1{16}(\sigma^\beta{}^{a\dot
d}\sigma^{\lambda}{}_{c\dot d} -\sigma^\lambda{}^{a\dot
d}\sigma^{\beta}{}_{c\dot d})\xi^c\partial_\beta\delta
g_{\mu\lambda}.
\] So finally, we get the formula for the variation of the covariant derivative of $\xi$:

\begin{equation}\label{VariationOfCovariantDerivativeFullSpecialCase}
\delta\{\! \nabla\! \}_\mu\xi^a =\frac18\xi^c (\sigma^\alpha{}^{a\dot
d}\sigma^{\beta}{}_{c\dot d} -\sigma^\beta{}^{a\dot
d}\sigma^{\alpha}{}_{c\dot d})
\partial_\beta \delta g_{\mu\alpha}.
\end{equation}
\begin{lemma}
The energy momentum tensor of the action (\ref{neutrinoAction}) is equation (\ref{EMT}).
\end{lemma}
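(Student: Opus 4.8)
The \emph{energy--momentum tensor} here means $E^{\mu\nu}$ read off from $\delta S_\mathrm{neutrino}=\int E^{\mu\nu}\,\delta g_{\mu\nu}$ (the sign convention being the one compatible with $\delta\int\mathcal R=-\int(Ric^{\mu\nu}-\frac12\mathcal R g^{\mu\nu})\delta g_{\mu\nu}$ used in the main text), the spinor $\xi$ being held fixed. The plan is to collect the three places where the metric enters the action (\ref{neutrinoAction}): the Riemannian volume element implicit in $\int$, the Pauli matrices $\sigma^\mu{}_{a\dot b}$, and the Levi--Civita covariant derivative $\{\!\nabla\!\}_\mu$ acting on $\xi$ and $\bar\xi$. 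As in Remark \ref{EMTsimple} I would run the computation at a point where $g$ is Minkowski and $\{\!\nabla\!\}=\partial$ (exactly the setting in which the preceding lemma (\ref{VariationOfCovariantDerivativeFullSpecialCase}) was proved); the fully covariant formula (\ref{EMT}) is then recovered by general covariance.

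\emph{Volume element.} Since $\delta\sqrt{|\det g|}=\frac12\sqrt{|\det g|}\,g^{\alpha\beta}\delta g_{\alpha\beta}$, this contribution is simply the integrand of (\ref{neutrinoAction}) multiplied by $\frac12 g^{\mu\nu}\delta g_{\mu\nu}$; carrying the prefactor $2i$ through, it is precisely the second line of (\ref{EMT}), i.e.\ $i g^{\mu\nu}\bigl(\xi^a\sigma^\eta{}_{a\dot b}\{\!\nabla\!\}_\eta\bar\xi^{\dot b}-(\{\!\nabla\!\}_\eta\xi^a)\sigma^\eta{}_{a\dot b}\bar\xi^{\dot b}\bigr)$. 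This is the term that makes $E^{\mu\nu}$ not a priori trace--free, and it is the one that drops once the massless Dirac equation holds.

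\emph{Pauli matrices.} By (\ref{SymmetricVariationOfCovariantandContravariantPauliMatrices}), $\delta\sigma^\mu{}_{a\dot b}=-\frac12 g^{\mu\beta}(\delta g_{\beta\gamma})\sigma^\gamma{}_{a\dot b}$. Substituting this into the two terms of (\ref{neutrinoAction}) and using the $g^{\mu\beta}$ to raise an index on $\{\!\nabla\!\}$, the two pieces combine into $i(\delta g_{\beta\gamma})\,\sigma^\gamma{}_{a\dot b}\bigl(\bar\xi^{\dot b}\{\!\nabla\!\}^\beta\xi^a-\xi^a\{\!\nabla\!\}^\beta\bar\xi^{\dot b}\bigr)$; since $\delta g_{\beta\gamma}$ is symmetric one may symmetrise the coefficient in $(\beta,\gamma)$, and after relabelling $(\beta,\gamma)\to(\mu,\nu)$ this is exactly the square--bracketed first line of (\ref{EMT}).

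\emph{Levi--Civita connection} --- the delicate step. Substituting (\ref{VariationOfCovariantDerivativeFullSpecialCase}) for $\delta\{\!\nabla\!\}_\mu\xi^a$ and its complex conjugate for $\delta\{\!\nabla\!\}_\mu\bar\xi^{\dot b}$ into (\ref{neutrinoAction}) gives terms carrying $\partial_\beta\delta g_{\mu\alpha}$; integrating by parts frees $\delta g_{\mu\alpha}$. The key point is that the spinor connection enters with opposite signs on an upper undotted and a lower dotted index, so the two contributions add into an \emph{anticommutator}-type object $\sigma^\mu{}_{a\dot b}\bigl(\sigma^\alpha{}^{a\dot d}\sigma^\beta{}_{c\dot d}-\sigma^\beta{}^{a\dot d}\sigma^\alpha{}_{c\dot d}\bigr)+\text{c.c.}$; using the Clifford relation $\sigma^\alpha{}_{a\dot b}\sigma^{\beta c\dot b}+\sigma^\beta{}_{a\dot b}\sigma^{\alpha c\dot b}=2g^{\alpha\beta}\delta_a{}^c$ the $g^{\mu\alpha}$-, $g^{\mu\beta}$- and $g^{\alpha\beta}$-trace pieces cancel and what survives is totally antisymmetric in $(\mu,\alpha,\beta)$, hence antisymmetric in $(\mu,\alpha)$. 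Contracted with the symmetric $\partial_\beta\delta g_{\mu\alpha}$ it vanishes, so the Levi--Civita connection contributes nothing to $E^{\mu\nu}$ and the volume and Pauli--matrix pieces already reproduce (\ref{EMT}). The main obstacle is precisely this third step: tracking the several three--Pauli--matrix contractions and checking that, once the Clifford identity and the opposite--sign structure of $\{\!\nabla\!\}$ on dotted versus undotted indices are used, everything collapses onto the totally antisymmetric piece annihilated by the symmetry of $\delta g_{\mu\nu}$; the remaining two steps are routine substitution and symmetrisation.
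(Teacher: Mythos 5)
Your proposal is correct and takes essentially the same route as the paper's own proof: the volume--element variation produces the $g^{\mu\nu}$ terms in the second line of (\ref{EMT}), the symmetric variation (\ref{SymmetricVariationOfCovariantandContravariantPauliMatrices}) of the Pauli matrices produces the square--bracketed first line, and the contribution from $\delta\{\!\nabla\!\}\xi$, $\delta\{\!\nabla\!\}\bar\xi$ vanishes. Your concluding step (Clifford relation kills the trace pieces, the antisymmetric remainder dies against the symmetric $\delta g_{\mu\nu}$) is precisely the cancellation the paper carries out explicitly after integration by parts via its two cubic Pauli--matrix identities, so the argument is the same, merely sketched where the paper computes.
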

\begin{proof}
Varying the action (\ref{neutrinoAction}) with respect to the metric, we get
\begin{eqnarray*}
\delta S &=& 2i\delta\int\left(\xi^a\,\sigma^\eta{}_{a\dot
b}\,(\{\! \nabla\! \}_\eta\overline\xi^{\dot b}) \ -\
(\{\! \nabla\! \}_\eta\xi^a)\,\sigma^\eta{}_{a\dot b}\,\overline\xi^{\dot
b}\right) \sqrt{|\det g|}
\\
&=&2i\int\xi^a\,(\delta\sigma^\eta{}_{a\dot b})\,(\{\! \nabla\! \}_\eta\overline\xi^{\dot b})
+\xi^a\,\sigma^\eta{}_{a\dot
b}\,(\delta\{\! \nabla\! \}_\eta\overline\xi^{\dot b}) - (\delta
\{\! \nabla\! \}_\eta\xi^a)\,\sigma^\eta{}_{a\dot b}\,\overline\xi^{\dot b}
-(\{\! \nabla\! \}_\eta\xi^a)\,(\delta\sigma^\eta{}_{a\dot
b})\,\overline\xi^{\dot b}
\\
&&+\frac12\left(\xi^a\,\sigma^\eta{}_{a\dot
b}\,(\{\! \nabla\! \}_\eta\overline\xi^{\dot b}) \ -\
(\{\! \nabla\! \}_\eta\xi^a)\,\sigma^\eta{}_{a\dot b}\,\overline\xi^{\dot
b}\right)g^{\mu\nu}\delta g_{\mu\nu}
\end{eqnarray*}
and using equation
(\ref{SymmetricVariationOfCovariantandContravariantPauliMatrices}) we get
\begin{eqnarray*}
\delta S &=&
2i\int\left(-\frac14\xi^a\,g^{\eta\mu}\sigma^{\nu}{}_{a\dot b}\,
(\{\! \nabla\! \}_\eta\overline\xi^{\dot b})
-\frac14\xi^a\,g^{\eta\nu}\sigma^{\mu}{}_{a\dot b}\,
(\{\! \nabla\! \}_\eta\overline\xi^{\dot b})
+\frac14(\{\! \nabla\! \}_\eta\xi^a)\,g^{\eta\mu}\sigma^{\nu}{}_{a\dot
b}\,\overline\xi^{\dot b}\right.
\\
&+& \left.\frac14(\{\! \nabla\! \}_\eta\xi^a)\,g^{\eta\nu}\sigma^{\mu}{}_{a\dot
b}\,\overline\xi^{\dot b} +\frac12\xi^a\,\sigma^\eta{}_{a\dot
b}\,(\{\! \nabla\! \}_\eta\overline\xi^{\dot b})g^{\mu\nu} -
\frac12(\{\! \nabla\! \}_\eta\xi^a)\,\sigma^\eta{}_{a\dot
b}\,\overline\xi^{\dot b}g^{\mu\nu}\right) \delta g_{\mu\nu}
\\
&+&\xi^a\,\sigma^\eta{}_{a\dot
b}\,(\delta\{\! \nabla\! \}_\eta\overline\xi^{\dot b}) -(\delta
\{\! \nabla\! \}_\eta\xi^a)\,\sigma^\eta{}_{a\dot b}\,\overline\xi^{\dot b}.
\end{eqnarray*}
Now we look at the terms involving the variation of $\{\! \nabla\! \} \xi$ on
their own. Using equation
(\ref{VariationOfCovariantDerivativeFullSpecialCase}) we get
\begin{equation*}
I_1 = \frac{i}{4} \int \xi^a\,\sigma^\mu{}_{a\dot b}\,
\overline{\xi}^{\dot d}(\sigma^\nu{}^{c\dot b}\sigma^{\eta}{}_{c\dot
d} -\sigma^\eta{}^{c\dot b}\sigma^{\nu}{}_{c\dot d})\partial_\eta
\delta g_{\mu\nu}- \xi^c (\sigma^\nu{}^{a\dot
d}\sigma^{\eta}{}_{c\dot d} -\sigma^\eta{}^{a\dot
d}\sigma^{\nu}{}_{c\dot d})
\partial_\eta \delta g_{\mu\nu}
\,\sigma^\mu{}_{a\dot b}\,\overline\xi^{\dot b}.
\end{equation*}
Integrating by parts and using the simplifications from Remark \ref{EMTsimple}, we get
\[
I_1=
\frac{i}{4} \int \overline{\xi}^{\dot b}\{\! \nabla\! \}_\eta\xi^a
\, \left( -\sigma^\mu{}_{a\dot d}\sigma^\nu{}^{c\dot
d}\sigma^{\eta}{}_{c\dot b} +\sigma^\mu{}_{a\dot
d}\sigma^\eta{}^{c\dot d}\sigma^{\nu}{}_{c\dot b}+
\sigma^{\eta}{}_{a\dot d}\sigma^\nu{}^{c\dot d}\sigma^\mu{}_{c\dot
b} -\sigma^{\nu}{}_{a\dot d}\sigma^\eta{}^{c\dot
d}\sigma^\mu{}_{c\dot b}\right.
\]
\[
\left.-\sigma^\mu{}_{a\dot d}\sigma^\nu{}^{c\dot d}\sigma^{\eta}{}_{c\dot
b} +\sigma^\mu{}_{a\dot d}\sigma^\eta{}^{c\dot
d}\sigma^{\nu}{}_{c\dot b} +\sigma^{\eta}{}_{a\dot
d}\sigma^\nu{}^{c\dot d}\sigma^\mu{}_{c\dot b}
-\sigma^{\nu}{}_{a\dot d}\sigma^\eta{}^{c\dot d}\sigma^\mu{}_{c\dot
b} \right) \delta g_{\mu\nu}.
\]Since we have $\displaystyle
(\sigma^\mu{}_{a\dot d}\sigma^\eta{}^{c\dot d}\sigma^{\nu}{}_{c\dot
b} -\sigma^{\nu}{}_{a\dot d}\sigma^\eta{}^{c\dot
d}\sigma^\mu{}_{c\dot b}) \delta g_{\mu\nu} =0,
$ as it is a product of symmetric and antisymmetric tensors, as well as (after a lengthy but straightforward calculation)
\begin{equation*}\label{cubic term in sigma zero}
\sigma^{\eta}{}_{a\dot d}\sigma^\nu{}^{c\dot d}\sigma^\mu{}_{c\dot
b}+ \sigma^{\eta}{}_{a\dot d}\sigma^\mu{}^{c\dot
d}\sigma^\nu{}_{c\dot b} -\sigma^\mu{}_{a\dot d}\sigma^\nu{}^{c\dot
d}\sigma^{\eta}{}_{c\dot b} -\sigma^\nu{}_{a\dot
d}\sigma^\mu{}^{c\dot d}\sigma^{\eta}{}_{c\dot b} = 0,
\end{equation*}
we have shown that the terms involving $\delta \{\! \nabla\! \} \xi$ do
not contribute to the variation, i.e. $I_1 = 0.$
We now return to the variation of the whole action, which after
some simplification becomes
\[
\frac{\delta S}{\delta g} = \frac{i}2\int\left(\sigma^{\nu}{}_{a\dot
b} ((\{\! \nabla\! \}^\mu\xi^a)\overline\xi^{\dot
b}-\xi^a\{\! \nabla\! \}^\mu\overline\xi^{\dot b}) +\sigma^{\mu}{}_{a\dot b}
((\{\! \nabla\! \}^\nu\xi^a)\overline\xi^{\dot
b}-\xi^a\{\! \nabla\! \}^\nu\overline\xi^{\dot b})\right)\delta g_{\mu\nu}
\]
\[
+i\int \left( \xi^a\,\sigma^\eta{}_{a\dot
b}\,(\{\! \nabla\! \}_\eta\overline\xi^{\dot b})g^{\mu\nu} -
(\{\! \nabla\! \}_\eta\xi^a)\,\sigma^\eta{}_{a\dot b}\,\overline\xi^{\dot
b}g^{\mu\nu}\right) \delta g_{\mu\nu}.
\]
Finally we can conclude that the energy momentum tensor of the action (\ref{neutrinoAction}) is exactly equation (\ref{EMT}). \qed
\end{proof}

\renewcommand{\thesection}{C}
  % redefine the command that creates the equation no.
  \setcounter{section}{0}  % reset counter

\section{Correction of explicit form of the second field equation and future work}\label{AppendixCorrection}

In calculating the Bianchi identity for curvature in Appendix B from \cite{pasicvassiliev}, there was an arithmetic error, hence equation (B.11) from that work should read
\[
\nabla_\eta Ric^\eta{}_\lambda =
-\frac12 Ric^\eta{}_\xi T_\eta{}^\xi{}_{\lambda} -\frac12 \mathcal{W}^{\eta\zeta}{}_{\lambda\xi}(T_\eta{}^\xi{}_{\zeta}-T_\zeta{}^\xi{}_{\eta}),
\] and from here equation (B.12) should read
{\small
\begin{eqnarray*}
\nabla_\eta \mathcal{W}^\eta{}_{\mu\lambda\kappa} &=& \mathcal{W}^\eta{}_{\mu\kappa\xi}(T_\eta{}^\xi{}_{\lambda}-T_\lambda{}^\xi{}_{\eta} )
+ \mathcal{W}^\eta{}_{\mu\lambda\xi}(T_\kappa{}^\xi{}_{\eta}-T_\eta{}^\xi{}_{\kappa}) \\
&+&\frac14(T_\zeta{}^\xi{}_{\eta}-T_\eta{}^\xi{}_{\zeta})(g_{\mu\lambda}\mathcal{W}^{\eta\zeta}{}_{\kappa\xi}-g_{\mu\kappa}\mathcal{W}^{\eta\zeta}{}_{\lambda\xi})
+\frac14Ric^\eta{}_\xi
(g_{\mu\lambda}T_\eta{}^\xi{}_{\kappa}-g_{\mu\kappa}T_\eta{}^\xi{}_{\lambda})
\\
&+&\frac12\left[\nabla_\lambda Ric_{\mu\kappa}-
\nabla_\kappa Ric_{\mu\lambda} + Ric^\eta{}_\kappa(T_{\lambda\eta\mu}-T_{\eta\lambda\mu})
+Ric^\eta{}_\lambda(T_{\eta\kappa\mu}-T_{\kappa\eta\mu})\right]
\end{eqnarray*}} Consequently, when these two are used in calculating the explicit form of the field equation (\ref{eulerlagrangeconnection})
this produces a different result to the one presented in \cite{pasicvassiliev}. Namely, equation (\ref{eulerlagrangeconnection}) in its explicit form (i.e. equation (27) from \cite{pasicvassiliev}) should read \begin{eqnarray*}
d_6 \nabla_\lambda Ric_{\kappa\mu}  & - & d_7 \nabla_\kappa Ric_{\lambda\mu}
\\
&+&d_6\left(
Ric^\eta{}_\kappa\left(T_{\eta\mu\lambda}-T_{\lambda\mu\eta}\right)
+\frac12 g_{\mu\lambda} \mathcal{W}^{\eta\zeta}{}_{\kappa\xi}(T_\eta{}^\xi{}_{\zeta}-T_\zeta{}^\xi{}_{\eta})
+\frac12 g_{\mu\lambda} Ric^\eta{}_\xi T_\eta{}^\xi{}_{\kappa}
\right)
\\
&+&d_7 \left(
Ric^\eta{}_\lambda\left(
T_{\eta\mu\kappa} - T_{\kappa\mu\eta}
\right)
+\frac12 g_{\kappa\mu} \mathcal{W}^{\eta\zeta}{}_{\lambda\xi}(T_\eta{}^\xi{}_{\zeta}-T_\zeta{}^\xi{}_{\eta})
+\frac12 g_{\kappa\mu}Ric^\eta{}_\xi T_\eta{}^\xi{}_{\lambda}
)
\right)
\\
&+&b_{10} \left(T_\eta{}^\xi{}_{\zeta}-T_\zeta{}^\xi{}_{\eta})(g_{\mu\kappa}\mathcal{W}^{\eta\zeta}{}_{\lambda\xi}-g_{\mu\lambda}\mathcal{W}^{\eta\zeta}{}_{\kappa\xi}\right)
\\
&+&2b_{10}\left(
\mathcal{W}^\eta{}_{\mu\kappa\xi}(T_\eta{}^\xi{}_{\lambda}-T_\lambda{}^\xi{}_{\eta} )
+ \mathcal{W}^\eta{}_{\mu\lambda\xi}(T_\kappa{}^\xi{}_{\eta}-T_\eta{}^\xi{}_{\kappa}) - \mathcal{W}^{\xi\eta}{}_{\kappa\lambda} T_{\eta\mu\xi}
\right)=0.
\end{eqnarray*}
This change in no way affects the proof of the main theorem, as this form of the field equation is even simpler and does not contain any additional terms, but we believed it was important to point out for purposes of future work. This mistake was noticed in the process of generalising the explicit form of
the field equations (\ref{eulerlagrangemetric}), (\ref{eulerlagrangeconnection}), i.e. the equations equation (26) and (27) from \cite{pasicvassiliev}, see \cite{skopje}.

The two papers of Singh \cite{singh1,singh2} are of particular interest to us, where the author constructs solutions for the Yang--Mills case (\ref{YMq}) with purely axial and purely trace torsion respectively, see (\ref{torsion-pieces1}), and unlike the solution of \cite{griffiths3},
$\{Ric\}$ is not assumed to be zero. It is obvious that these solutions differ
from the ones presented in our work, as the torsion of generalised pp-waves is assumed to be purely tensor.
It would however be of interest to us to see whether this construction of Singh's can be expanded
to our most general $\mathrm{O}(1,3)$-invariant quadratic form $q$ with 16 coupling
constants.

We also hope to see whether it is possible to produce torsion waves which are purely axial or trace and combine them with the pp-wave metric, in a similar fashion as was done with purely tensor torsion waves, in order to produce new solutions of quadratic metric--affine gravity and also give \emph{their} physical interpretation in the near future.


\begin{thebibliography}{74}

\bibitem{Adamowitz}
Adamowicz, W.: Plane waves in gauge theories of gravitation. Gen. Rel. Grav. {\bf 12}, 677–691
(1980)

\bibitem{Alekseevsky}
Alekseevsky, D. V.:
Holonomy groups and recurrent tensor fields in Lorentzian spaces.
 Problems of the Theory of Gravitation and Elementary Particles
issue 5 edited by Stanjukovich, K. P.
(Moscow: Atomizdat) 5--17. In Russian (1974)

\bibitem{audretsch2}
Audretsch, J.:
Asymptotic behaviour of neutrino fields in curved space-time.
 Commun. math. Phys.  {\bf 21}, 303--313 (1971)

\bibitem{audretsch3}
Audretsch, J.,  Graf, W.:
Neutrino radiation in gravitational fields.
Commun. math. Phys.  {\bf 19}, 315--326 (1970)

\bibitem{Babourova}
Babourova, O. V., Frolov, B. N., Klimova, E. A.: Plane torsion waves in quadratic gravitational theories in Riemann–Cartan space.
Class. Quantum Grav. {\bf 16}, 1149–1162 (1999) [arXiv:gr-qc/9805005]

\bibitem{LL4}
Berestetskii, V. B., Lifshitz, E. M.,  Pitaevskii, L. P.:  Quantum
Electrodynamics (Course of Theoretical Physics vol 4) 2nd edn,
Pergamon Press, Oxford, (1982)

\bibitem{blagojevic}
Blagojevic, M.:  Gravitation and Gauge Symmetries.
Institute of Physics Publishing, Bristol (2002)

\bibitem{blagojevichehl}
Blagojevi\' c, M., Hehl F.W.: Gauge Theories of
Gravitation. A Reader with Commentaries. Imperial College Press, London (2013)

\bibitem{wheeler}
Brill, D. R.,  Wheeler, J. A.: Interaction of Neutrinos and
Gravitational Fields.  Rev. Mod. Phys. {\bf 29}, 465--479 (1957)

\bibitem{brinkmann}
Brinkmann, M. W.: On Riemann spaces conformal to Euclidean space.
Proceedings of the National Academy of Sciences of USA {\bf 9} 1–-3 (1923)

\bibitem{Bryant}
Bryant, R. L.: Pseudo--Riemannian metrics with parallel spinor
fields and vanishing Ricci tensor.  Global Analysis and
Harmonic Analysis (Marseille--Luminy, 1999) S\'emin. Congr.
\textbf{4} (Paris: Soc. Math. France), 53--94 (2000) [arXiv:math/0004073]

\bibitem{buchbinderandkuzenko}
Buchbinder, I. L., Kuzenko, S. M.:  Ideas and Methods of
Supersymmetry and Supergravity. Institute of Physics Publishing,
Bristol (1998)

\bibitem{buchdahl}
Buchdahl, H.A.: Mathematical Reviews {\bf 20}, 1238 (1959)

\bibitem{collinsonmorris}
Collinson, C. D.,  Morris, P. B.:
Space-time admitting neutrino fields with zero energy and momentum.
 J. Phys. {\bf A6}, 915--916 (1972)

\bibitem{cotton}
Cotton, \' E.:  Sur les variétés a trois dimensions. Annales de la Faculté des Sciences de Toulouse. II {\bf 1 4} 385–-438 (1899)

\bibitem{davisray}
Davis, T. M., Ray, J. R.:
Ghost neutrinos in general relativity.
 Phys. Rev. D {\bf 9}, 331--333 (1974)

\bibitem{eddington}
Eddington, A. S.: The Mathematical Theory of Relativity. 2nd ed. Cambridge : The University Press (1952)

\bibitem{Esser}
Esser, W.: Exact Solutions of the Metric--Affine Gauge Theory
of Gravity. (University of Cologne: Diploma Thesis) (1996)

\bibitem{fairchild1976}
Fairchild, E. E. Jr.: Gauge theory of gravitation. Phys. Rev. D {\bf 14}, 384--391 (1976)

\bibitem{fairchild1976erratum}
Fairchild, E. E. Jr.:  Erratum: Gauge theory of gravitation. Phys. Rev. D {\bf 14}, 2833 (1976)

\bibitem{garcia}
Garc\' ia, A., Mac\' ias, A., Puetzfeld, D., Socorro, J.: Plane fronted waves in metric
affine gravity. Phys. Rev. D {\bf 62}, 044021 (2000) [arXiv:gr-qc/0005038]

\bibitem{garciaCotton}
Garc\' ia, A., Hehl, F.W., Heinicke, C., Mac\' ias, A.:  The Cotton tensor in Riemannian spacetimes. Class. Quantum Grav. {\bf 21}, 1099–1118 (2004) [arXiv:gr-qc/0309008]


\bibitem{griffiths book}
Griffiths, J. B.: Colliding Plane Waves in General Relativity. Oxford University Press (1991)

%\bibitem{griffiths neutrino}
%Griffiths, J. B.: Neutrino fields in Einstein--Cartan theory.
%Gen. Rel. Grav. {\bf 13}, 227--237 (1981)

\bibitem{griffiths1970_1}
Griffiths, J. B.,  Newing, R. A.: The two-component neutrino field in general relativity.
 J. Phys. A {\bf 3}, 136--149 (1970)

\bibitem{griffiths1970_2}
Griffiths, J. B.,  Newing, R. A.: Tetrad equations for the two-component neutrino field in general relativity.
J. Phys. A {\bf 3}, 269--273 (1970)

\bibitem{griffiths1972_1}
Griffiths, J. B.: Some physical properties of neutrino-gravitational fields.
 Int. J. Theoret. Phys. {\bf 5}, 141--150 (1972)

\bibitem{griffiths1972_2}
Griffiths, J. B.: Gravitational radiation and neutrinos.
 Commun. Math. Phys. {\bf 28}, 295--299 (1972)

\bibitem{griffiths1980}
Griffiths, J. B.:Ghost neutrinos in Einstein--Cartan theory.
 Phys. Lett. A {\bf 75}, 441--442 (1980)

\bibitem{griffiths1981}
Griffiths, J. B.:  Neutrino fields in Einstein--Cartan theory.
Gen. Rel. Grav. {\bf 13}, 227--237 (1981)

\bibitem{hehl habilitation}
Hehl, F. W.: Spin und Torsion in der Allgemeinen
Relativit\"atstheorie oder die Riemann--Cartansche Geometrie der
Welt. (Technischen Universit\"at Clausthal: Habilitationsschrift) (1970)

\bibitem{hehl neutrino 1}
Hehl, F. W.: Spin and torsion in general relativity I: Foundations.
 Gen. Rel. Grav. {\bf 4}, 333--349 (1973)

\bibitem{hehl neutrino 2}
Hehl, F. W.: Spin and torsion in general relativity II: Geometry and field equations.
 Gen. Rel. Grav. {\bf 5}, 491--516 (1974)

\bibitem{hehlreview}
Hehl, F. W., McCrea, J. D., Mielke, E. W.,  Ne'eman, Y.: Metric--affine
gauge theory of gravity: field equations, Noether identities, world
spinors, and breaking of dilation invariance.  Phys. Rep. {\bf
258}, 1--171 (1995) [arXiv:gr-qc/9402012]

\bibitem{hehlandmaciasexactsolutions2}
Hehl, F. W., Mac{\'\i}as, A.:  Metric--affine gauge theory of
gravity II. Exact solutions.  Int. J. Mod. Phys. {\bf D8}, 399--416 (1999) [arXiv:gr-qc/9902076]

\bibitem{hehl neutrino 3}
Hehl, F. W., von der Heyde, P., Kerlick, G. D.,  Nester, J. M.: General
relativity with spin and torsion: Foundations and prospects.
Rev. Mod. Phys. {\bf 48}, 393--416 (1976)

\bibitem{higgs}
Higgs, P. W.: Quadratic Lagrangians and general relativity. Nuovo Cimento {\bf 11}, 816--820 (1959)

\bibitem{King and Vassiliev}
King, A. D.,  Vassiliev, D.: Torsion waves in metric--affine field
theory.  Class. Quantum Grav. {\bf 18}, 2317--2329 (2001) [arXiv:gr-qc/0012046]

\bibitem{exact solutions Kramer}
Kramer, D., Stephani, H., Herlt, E.,  MacCallum, M.:  Exact
Solutions of Einstein's Field Equations. Cambridge University
Press, Cambridge (1980)

\bibitem{KronerLattices}
Kr\" oner, E.: Continuum theory of defects.
Physics of Defects, Les Houches, Session XXXV, 1980, R.Balian ct al.,
eds., North--Holland, Amsterdam (1980)

\bibitem{KronerKristal}
Kr\" oner, E.:  The continuized crystal -- a bridge between micro-- and macromechanics.
Gesellschaft angewandte Mathematik und Mechanik Jahrestagung Goettingen West Germany Zeitschrift Flugwissenschaften, Vol. {\bf 66} (1986)

\bibitem{kuchowitz}
Kuchowicz, C., \. Zebrowski, J.:
The presence of torsion enables a metric to allow a gravitational field.
Phys. Lett. A {\bf 67}, 16--18 (1978)

\bibitem{lanczos1}
Lanczos, C.:  A remarkable property of the Riemann--Christoffel tensor in four dimensions.
Ann. Math. {\bf 39}, 842--850 (1938)

\bibitem{lanczos2}
Lanczos, C.: Lagrangian multiplier and Riemannian spaces. Rev. Mod. Phys. {\bf 21}, 497--502 (1949)

\bibitem{lanczos3}
Lanczos, C.: Electricity and general relativity. Rev. Mod. Phys. {\bf 29}, 337--350 (1957)

\bibitem{LL2}
Landau, L. D.,  Lifshitz, E. M.:  The Classical Theory of Fields
(Course of Theoretical Physics vol 2) 4nd edn. Pergamon Press,
Oxford (1975)

\bibitem{mielkepseudoparticle}
Mielke, E. W.: On pseudoparticle solutions in Yang's theory of
gravity. Gen. Rel. Grav. {\bf 13}, 175--187 (1981)

\bibitem{Nakahara}
Nakahara, M.:  Geometry, Topology and Physics. Institute of Physics Publishing, Bristol (1998)

\bibitem{obukhov1}
Obukhov, Y. N.: Generalized plane fronted gravitational waves in any dimension. Phys. Rev. D {\bf 69}, 024013 (2004) [gr-qc/0310121]

\bibitem{obukhov2}
Obukhov, Y. N.: Plane waves in metric-affine gravity. Phys. Rev. D {\bf 73}, 024025 (2006) [gr-qc/0601074]

\bibitem{olesen}
Olesen, P.:  A relation between the Einstein and the Yang--Mills
field equations. Phys. Lett. B {\bf 71}, 189--190 (1977)

\bibitem{balkanica}
Pasic, V.: New Vacuum Solutions for Quadratic Metric-Affine Gravity - a Metric Affine Model for the Massless Neutrino?
 Mathematica Balkanica New Series {\bf 24}, Fasc 3-4 329 (2010)

\bibitem{skopje}
Pasic, V., Barakovic, E., Okicic, N.: A new representation of the field equations of quadratic metric-affine gravity.
Adv. Math., Sci. J. {\bf 3} 1, 33--46 (2014)

\bibitem{pasicvassiliev}
Pasic, V.,  Vassiliev, D.:
PP--waves with torsion and metric--affine gravity.
 Class. Quantum Grav. {\bf 22}, 3961--3975 (2005) [arXiv:gr-qc/0505157]

\bibitem{pauli}
Pauli, W.: Zur Theorie der Gravitation und der Elektrizit\" at von
Hermann Weyl. Physik. Zaitschr. {\bf 20}, 457--467 (1919)

\bibitem{pavelleApr1975}
Pavelle, R.: Unphysical solutions of Yang's gravitational--field
equations. Phys. Rev. Lett. {\bf 34}, 1114 (1975)

\bibitem{penroserindler}
Penrose, O.,  Rindler, W.: Propagating modes in gauge field theories
of gravity, 2 volumes.
Cambridge University Press, Oxford (1984, 1986)

\bibitem{peres}
Peres, A.: Some gravitational waves.  Phys. Rev. Lett.
{\bf 3}, 571--572 (1959)

\bibitem{peresweb}
Peres, A.: PP -- WAVES {\it preprint} hep--th/0205040 (reprinting of
\cite{peres}) (2002)

\bibitem{pirani}
Pirani, F. A E.:  Introduction to Gravitational Radiation
Theory.  Lectures on General Realtivity. Prentice--Hall,
Inc. englewood Cliffs, New Jersey (1964)

\bibitem{singh1}
Singh, P.: On axial vector torsion in vacuum quadratic Poincar\' e gauge field theory.
 Phys. Lett. A {\bf 145}, 7--10 (1990)

\bibitem{singh2}
Singh, P.:  On null tratorial torsion in vacuum quadratic Poincar\' e gauge field theory.
Class. Quantum Grav. {\bf 7}, 2125--2130 (1990)

\bibitem{singhgriffiths}
Singh, P., Griffiths, J. B.: On neutrino fields in Einstein--Cartan theory.
 Phys. Lett. A {\bf 132}, 88--90 (1988)

\bibitem{griffiths3}
Singh, P.,  Griffiths, J. B.: A new class of exact solutions of the
vacuum quadratic Poincar\'e gauge field theory.  Gen. Rel. Grav.
{\bf 22}, 947--956 (1990)

\bibitem{stephenson}
Stephenson, G.: Quadratic Lagrangians and general relativity. Nuovo Cimento {\bf 9}, 263--269 (1958)

\bibitem{StreaterWightman}
Streater, R. F.,  Wightman, A. S.:  PCT, spin and statistics,
and all that. Princeton Landmarks in Physics (Princeton University
Press, Princeton, NJ), ISBN 0-691-07062-8, corrected third printing
of the 1978 edition (2000)

\bibitem{thompsonFeb1975}
Thompson, A. H.:  Yang's gravitational field equations. Phys.
Rev. Lett. {\bf 34}, 507--508 (1975)

\bibitem{thompsonAug1975}
Thompson, A. H.: Geometrically degenerate solutions of the
Kilmister--Yang equations. Phys. Rev. Lett. {\bf 35}, 320--322 (1975)

\bibitem{pseudo}
Vassiliev, D.:  Pseudoinstantons in metric--affine field theory.
Gen. Rel. Grav. {\bf 34}, 1239--1265 (2002) [arXiv:gr-qc/0108028]

\bibitem{garda}
Vassiliev, D.: Pseudoinstantons in metric--affine field theory.
 Quark Confinement and the Hadron Spectrum V, edited by
Brambilla, N. and Prosperi, G. M. (Singapore: World Scientific) 273--275 (2003)

\bibitem{poland}
Vassiliev, D.: Quadratic non--Riemannian gravity.  Journal of
Nonlinear Mathematical Physics {\bf 11}, Supplement, 204--216 (2004)

\bibitem{annalen}
Vassiliev, D.: Quadratic metric--affine gravity.  Ann. Phys.
(Lpz.) {\bf 14}, 231--252 (2005) [arXiv:gr-qc/0304028]

\bibitem{weylquadraticaction}
Weyl, H.:  Eine neue Erweiterung der Relativit\" atstheorie. Ann. Phys. (Lpz.) \textbf{59}, 101--133 (1919)

\bibitem{wilczek}
Wilczek, F.: Geometry and Interaction of Instantons. In: Quark Confinement and Field theory, eds.
D.~R.~Stump and D.~H.~Weingarten, Wiley-Interscience, New York, 211--219 (1977)

\bibitem{yang}
Yang, C.N.: Integral Formalism for Gauge Fields. Phys. Rev.
Lett. {\bf 33}, 445--447 (1974)

\end{thebibliography}
\end{document}